\newif\ifonecol
\newcommand{\inma}[1]{#1}
\providecommand{\U}[1]{\protect\rule{.1in}{.1in}}
\newtheorem{theorem}{Theorem}
 \newtheorem{definition}{Definition}
 \newtheorem{remark}{Remark}
\title{Stable Wireless Network Control Under Service Constraints}
 \author{
  \IEEEauthorblockN{Martin Kasparick\IEEEauthorrefmark{1} and Gerhard Wunder\IEEEauthorrefmark{2}} 
   \\\IEEEauthorblockA{\IEEEauthorrefmark{1} Fraunhofer Heinrich Hertz Institute, Berlin   
   (\textit{martin.kasparick@hhi.fraunhofer.de})} 
   \\\IEEEauthorblockA{\IEEEauthorrefmark{2} Freie Universit\"at Berlin
   (\textit{g.wunder@fu-berlin.de})}  
     %
 }
\begin{document}

\ifonecol
\onehalfspacing
\fi

\maketitle

\begin{abstract}
We consider the design of wireless queueing network control policies with
particular focus on combining stability with additional application-dependent
requirements. Thereby, we consequently pursue a cost function based approach
that provides the flexibility to incorporate constraints and requirements of
particular services or applications. As typical examples of such requirements,
we consider the reduction of buffer underflows in case of streaming traffic,
and energy efficiency in networks of battery powered nodes. Compared to the
classical throughput optimal control problem, such requirements significantly
complicate the control problem. We provide easily verifyable theoretical
conditions for stability, and, additionally, compare various candidate cost
functions applied to wireless networks with streaming media traffic. Moreover,
we demonstrate how the framework can be applied to the problem of energy
efficient routing, and we demonstrate the aplication of our framework in
cross-layer control problems for wireless multihop networks, using an advanced
power control scheme for interference mitigation, based on successive convex
approximation. In all scenarios, the performance of our control framework is
evaluated using extensive numerical simulations.

\end{abstract}

\onehalfspacing

\section{Introduction}

\inma{Since} the seminal work of
Tassiulas and Ephremides \cite{Tassiulas:1992}, throughput optimal control of
stochastic networks by MaxWeight-type policies has gained considerable
attention. However, \inma{soon} it was
 discovered that MaxWeight can lead to
significant delays \cite{subra_07_deco}, which may render
 \inma{the} practical application
\inma{of such control strategies} impossible.
\inma{As a result,} significant efforts followed
 \inma{in order to reduce} the
delay of MaxWeight-type policies \cite{Bui2011,Ying2010,Naghshvar2012}.

 \inma{Notably}, delay is not the
only relevant performance measure. For example, in networks
\inma{with size-limited}  receive
buffers, a policy that only considers
delay can be even harmful, since it \inma{might} be better to absorb
traffic peaks at intermediate buffers instead of routing packets as fast as
possible towards the end user. Moreover, when  the offered performance \inma{already meets application
targets}, it is unnecessary to invest further in incremental improvements of
particular performance measures. \inma{Consequently}, a
sophisticated control approach should be able to flexibly cope with specific,
possibly time-varying, requirements that are dictated by the actual
applications (streaming services, gaming, etc.) \cite{blau2007cost}. \inma{As an example,
consider} multimedia streaming traffic, which
constitutes a  large fraction of the traffic observed in wireless
networks nowadays. Here, we have to ensure that application buffers do not
 \inma{drain} in order to avoid
interruptions of the stream. This results in \textit{minimum buffer size
constraints}. \inma{On the other hand}, it should be  avoided
that buffers grow too large, since (1) buffers could be limited in size and
(2) the user might want to switch the video stream, which implies a waste of
already invested network resources.

\inma{Recently} in \cite{Wunder:2012}, we introduced a control framework
called $\mu$-MaxWeight which is specifically suited  to
\inma{address} \inma{such} \inma{policy design} challenges
\inma{in a joint fashion}. In general, our approach
\inma{not only} guarantees stability but  \inma{it} additionally allows to
incorporate certain application requirements \inma{derived from an
underlying cost function}. Moreover, it can flexibly react to changing user
requirements by applying different cost functions. Based on this
framework, we approach the design of wireless networks and corresponding
control policies from a service centric --or application centric-- point of
view. This may also imply a change of viewpoint: from stabilizing transmit
buffers to controlling the application buffers at the user side. We assume
that the service layer constraints can be expressed as requirements on buffer
states, i.e., applications may require maximum and minimum buffer states
(which, however, may differ for different network entities). While the first
type of constraints is well investigated, the second type bears serious
challenges, especially in multihop networks.

\subsection{Contributions}

We provide general sufficient conditions on the stability of a $\mu$-MaxWeight
based control policy, and \inma{we} prove two simpler stability
results \inma{that are} based on additional assumptions on the
network model. Moreover, we discuss several 
\inma{exemplary} field\inma{s} of application for our
control framework.

{In the first application,}  we
show that with a sophisticated cost function choice, buffer sizes of
particular queues can be steered towards beneficial operating points. The
choice of the underlying cost function is crucial for the performance of the
resulting control policy. Therefore, to numerically compare different cost
functions, we define the notion of \textit{queue outage}, \inma{which} 
\inma{both} incorporates buffer overflows to capture effects of
high packet holding costs or even dropped packets \inma{as well
as} buffer underflows to capture service interruptions.

In  {the} second application, we use
our framework to design a control policy for an energy constrained wireless
network, where traffic is routed via battery limited intermediate nodes. We
assume that when a node has no traffic in each of its queues, it can switch to
an energy-saving mode. Therefore, the goal is to maximize the ``down time'' of
as many nodes as possible, while guaranteeing stability of the network.

{Eventually, in the third application,}
 we consider cross-layer
optimization of wireless networks, and {we} combine
network-layer control with physical layer interference mitigation.
{Thereby, we}  demonstrate how our
framework can be combined with physical layer power control based on
successive convex approximation (SCA).

\subsection{Related Work}

After the seminal work in \cite{Tassiulas:1992}, significant research
activities followed. Many focussed on delay reduction of backpressure based
policies, such as \cite{Bui2011}. A general class of throughput optimal
policies with improved delay performance \inma{has been} recently
presented in \cite{Naghshvar2012}, and a general survey on recent policy
synthesis techniques can be found, for example, in \cite{MeynBook2007}.

Considering queueing networks, previous works consider underflow constraints
mainly in broadcast or simpler networks \cite{Cioffi2010}. For example, in
\cite{Cioffi2010}, the problem is investigated in a network of multiple
transmitter-receiver pairs with cooperating transmitters. However, since it
requires that buffer levels for each user progress independently of other
users, it cannot be easily generalized to the case of arbitrary (multihop)
networks (not even to a simple tandem network).

Recently, there has been some interest in queueing network control with an
arbitrary underlying cost-metric \cite{Meyn:2007}, allowing to incorporate
application-dependent constraints in the control policy. However, stability
and also cost performance crucially depend on the parameter choice,
{and this parameter choice} 
has to be {found}  in simulations.

The MaxWeight policy is also frequently applied in wireless cross-layer
control problems, in particular, {in} joint routing and power
control {problems} (see, e.g., \cite{Georgiadis_06}). A common
assumption on the physical layer is a fully coupled wireless network, where
nodes treat interference as noise. In this case, power control results in a
weighted sum-rate maximization problem which is known to be NP-hard
\cite{Luo2008}. In \cite{Papandriopoulos2009}, an SCA method was introduced as
a means to treat the inherent non-convexity.  
Compared to traditional \inma{power control in}  wireless networks  \cite{Kasparick2014},  it
eliminates the need to use the common \inma{unrealistic} high-SINR
approximation together with a log-transform of variables (see, e.g.,
\cite{Xi2010}). In \cite{Matskani2012_TSP},  a cross-layer
control formulation \inma{was used} based on the MaxWeight technique.

\subsection{Organization}

The rest of the paper is organized as follows. The system model is described
in Section \ref{sec:sysmodel}, and important preliminaries are summarized in
Section \ref{sec:preliminaries}. The $\mu$-MaxWeight policy, together with
important stability results, is introduced in Section \ref{sec:stability}.
{In} Section \ref{sec:costfunctions}, {we}
treat the important issue of cost-function choice, and {we}
show, how  {an appropriate choice}
can be used to steer buffers towards certain target values. In Section
\ref{sec:applications}, we apply our framework in different scenarios, such as
multimedia traffic, energy constrained nodes, and cross-layer network control.
In Section \ref{sec:conclusion}, we draw some relevant conclusions.

\subsection{Notation}

We use boldface letters to denote vectors as well as matrices, and common
letters with subscript are the elements, such that $A_{i}$ is the $i$-th
element of vector $\bm{A}$ and $B_{ij}$ is the element in row $i$ and column
$j$ of matrix $\bm{B}$. Moreover, $\bm{A}^{T}$ refers to the transpose of
$\bm{A}$. $\mathbb{E}\{X\}$ denotes the expected value of random variable $X$.
Let $\bm{I}$ denote the identity matrix of appropriate dimension. Furthermore,
we denote $\bm{1}$ the vector of all ones. $\mathbf{diag}(a_{1},a_{2},...)$
refers to a diagonal matrix built from the elements $a_{1},a_{2},...$,
$\Vert\cdot\Vert_{i}$ denotes the $l_{i}$ vector norm, and $\Vert
\mathbf{x}\Vert$ is an arbitrary norm. Furthermore, we use $\mathcal{A}^{c}$
to denote the complement of a set $\mathcal{A}$. The probability of
$\mathcal{A}$ is denoted as $\Pr\{\mathcal{A}\}$. $\left\langle \cdot
,\cdot\right\rangle $ denotes the scalar product of two vectors. The indicator
$\mathbb{I\{\cdot\}}$ {takes the value}
 1 if the argument is true, and
{it takes the value}  0 otherwise.

\section{System Model}

\label{sec:sysmodel}

In order to model queueing networks, we use a simple, discrete time,
stochastic network model, called Controlled Random Walk (CRW) model. We
consider a queueing network with $m$ queues in total that represent $m$
physical buffers with unlimited storage capacity. We arrange the queue backlog
in the vector $\mathbf{Q}$, such that $\mathbf{Q}=\left[  {Q}_{1},\ldots
,{Q}_{m}\right]  ^{T}$, which we refer to as the queue state. Let
$\mathcal{M}$ be the set of queue indices. Suppose that the evolution of the
queueing system is time slotted with $t\in\mathbb{N}_{0}$. Then, our model is
defined by the queueing law:%
\begin{equation}
\mathbf{Q}\left(  t+1\right)  =\left[  \mathbf{Q}\left(  t\right)
+\mathbf{B}\left(  t+1\right)  \mathbf{U}\left(  t\right)  \right]
^{+}+\mathbf{A}\left(  t+1\right) , \label{eqn:myCRW}%
\end{equation}
where $[{x}]^{+}:=\max\{0,x\}$. The vector process $\mathbf{A}\left(
t\right)  \in\mathbb{N}_{0}^{m}$ (vector of arrival rates in packets per slot)
is the (exogenous) influx to the queueing system with mean $\bm{\alpha}\in
\mathbb{R}_{+}^{m}$\footnote{
{in the numerical evaluations in this paper, we assume Poisson distributed arrival processes}},
and $\mathbf{B}\left(  t\right)  \in\mathbb{Z}_{0}^{m\times l}$ is a matrix
process with average $\mathbf{B}\in\mathbb{Z}_{0}^{m\times l}$, where
$l\in\mathbb{Z}_{+}$ is the number of control decisions to be made (that is,
the number of links in the network). The matrix $\mathbf{B}\left( t\right) $
contains both information about the network topology (that is, about
connectivity or possible routing paths in between queues) and service rates
along the specific links. We assume that the control vector $\mathbf{u}%
=\mathbf{U}\left(  t\right)  $ in slot $t$ is an element of the set $\left\{
0,1\right\}  ^{l}$. Moreover, we can impose further (linear) constraints on
the control using the binary constituency matrix $\mathbf{C}\in\mathbb{Z}%
_{0}^{l_{m}\times l}$ (with $l_{m}>0$ being the number of constraints), where
each row of $\mathbf{C}$ corresponds to a particular controllable network
resource.  More precisely, we require that  $\bm{Cu}\leq\bm{1}$, such that
$\bm{u}(t)\in\left\{ \bm{u}\in\{0,1\}^{l}: \bm{Cu}\leq\bm{1}\right\} $. For
the sake of notational simplicity, we omit the time index in the following if
possible. Throughout the entire paper, $\bm{x}\in\mathbb{N}_{0}^{m}$ denotes
the instantaneous backlog.

In the following, the queueing system (\ref{eqn:myCRW}) is assumed to be a
$\delta_{\mathbf{0}}$-irreducible Markov chain (with $\delta_{\mathbf{0}}$
being the point measure at $\bm{x}=\mathbf{0}$).

\subsection{Stability}

Throughout this paper we use the following definition of stability.

\begin{definition}
\label{definition_f_stable} A Markov chain is called \emph{f-stable}, if there
is an unbounded function $f:\mathbb{R}_{+}^{m}\rightarrow\mathbb{R}_{+}$, such
that for any $0<B<+\infty$ the set $\mathcal{B}:=\left\{  \mathbf{x}:f\left(
\mathbf{x}\right)  \leq B\right\}  $\ is compact, and furthermore it holds
\begin{equation}
\limsup_{t\rightarrow+\infty}\mathbb{E}\left\{  f\left(  \mathbf{Q}\left(
t\right)  \right)  \right\}  <+\infty. \label{condition_f_stable}%
\end{equation}

\end{definition}

In Definition \ref{definition_f_stable}, the function $f$ is unbounded in all
positive directions so that $f\left(  \mathbf{x}\right)  \rightarrow\infty$ if
$\Vert\mathbf{x}\Vert\rightarrow\infty$, \inma{regardless of the actual path
taken in $\mathbb{R}_{+}^{m}$}. Choosing directly $f\left(  \mathbf{x}\right)
=\Vert\mathbf{x}\Vert$, Definition \ref{definition_f_stable} is equivalent to
the definition of \emph{strongly stable} \cite{Leonardi:2003}, which implies
weak stability. Clearly, for any $f\left(  \mathbf{x}\right)  $ that grows
faster than $\Vert\mathbf{x}\Vert$, inequality (\ref{condition_f_stable})
implies that the Markov chain is strongly stable.

\inma{Using f-stability, we can define our notion of throughput optimality}.
For this we will call a vector of arrival rates $\boldsymbol{\alpha}%
\in\mathbb{R}_{+}^{m}$ \emph{stabilizable}, when the corresponding queueing
system, driven by some specific scheduling policy, is f-stable.

\begin{definition}
\label{definition_throughput_optimal} A scheduling policy is called
\emph{throughput-optimal} if it keeps the Markov chain \inma{f-stable} for
any vector of arrival rates $\boldsymbol{\alpha}$ which is stabilizable by
some policy.
\end{definition}

\section{Preliminaries}

\label{sec:preliminaries}

Let us introduce a cost function%
\[
c:\mathbb{N}_{0}^{m}\rightarrow\mathbb{R}_{+},\mathbf{x}\hookrightarrow
c\left(  \mathbf{x}\right)  ,
\]
assigning any queue state a non-negative number. Typically, the goal is to
minimize the average cost over a given finite or infinite time period, or some
discounted cost criterion. The optimal solution to the resulting problems
(which in discrete time can be modeled as a \emph{Markov Decision Problem})
can be found by dynamic programming, which is, however, infeasible for large
networks. A simple approach to queueing network control is the \emph{myopic or
greedy policy}. Such a policy selects the control decision that minimizes the
expected cost only for the next time slot.

In \cite{Meyn:2007}, a cost function based policy design framework called
$h$-MaxWeight is introduced, which is a generalization of the MaxWeight
policy. Meyn considers a slightly different definition of the CRW model, which
is characterized by
\begin{equation}
\mathbf{Q}\left(  t+1\right)  =\mathbf{Q}\left(  t\right)  +\mathbf{B}\left(
t+1\right)  \mathbf{U}\left(  t\right)  +\mathbf{A}\left(  t+1\right)
.\label{eqn:CRW}%
\end{equation}
The control {vector} $\mathbf{U}\left(  t\right)
\in\mathbb{N}_{0}^{l}$ is an element of the region $\mathcal{U}^{\ast}\left(
\mathbf{x}\right)  :=\mathcal{U}\left(  \mathbf{x}\right)  \cap\left\{
0,1\right\}  ^{l}$, with
\[
\mathcal{U}\left(  \mathbf{x}\right)  :=\left\{  \mathbf{u}\in\mathbb{R}%
_{+}^{l}:\mathbf{Cu}\leq\bm{1},\left[  \mathbf{Bu}+\bm{\alpha}\right]
_{i}\geq0\text{ for }x_{i}=0\right\}  .
\]
A $h$-MaxWeight policy {based on} \cite{Meyn:2007} chooses the
control vector according to%
\begin{equation}
\underset{\bm{u}\in\mathcal{U}^{\ast}\left(  \mathbf{x}\right)  }{\arg\min
}\langle\nabla h(\bm{x}),\bm{Bu}+\bm{\alpha}\rangle.\label{eq:hMW}%
\end{equation}
Thus, the policy is myopic with respect to the gradient of some perturbation
$h$ of the underlying cost function. Meyn develops two main constraints on the
function $h$. The first {constraint} requires the partial
derivative of $h$ to vanish when queues become empty, i.e.,
\begin{equation}
\frac{\partial h}{\partial x_{i}}(\bm{x})=0,\text{ if }x_{i}%
=0.\label{eq:derivcond}%
\end{equation}
Moreover, the dynamic programming inequality $\min_{\bm{u}\in\mathcal{U}%
(\bm{x})}\langle\nabla h(\bm{x}),\bm{Bu}+\bm{\alpha}\rangle\leq-c(\bm{x})$ has
to hold for the function $h$. When $h$ is non-quadratic, the derivative
condition (\ref{eq:derivcond}) is not always fulfilled. Therefore a
perturbation technique is used, where $h(\bm{x})=h_{0}(\tilde{\bm{x}})$, hence
it is a perturbation of a function $h_{0}$. Two perturbations are proposed: an
exponential perturbation with $\theta\geq1$, given by
\[
\tilde{x}_{i}:=x_{i}+\theta\left(  e^{-\frac{x_{i}}{\theta}}-1\right)  ,
\]
and a logarithmic perturbation with $\theta>0$, defined as%
\begin{equation}
\tilde{x}_{i}:=x_{i}\log\left(  1+\frac{x_{i}}{\theta}\right)
.\label{eq:pertLog}%
\end{equation}
While the first approach shows better performance in simulations, the
stability of the resulting policy depends on the parameter $\theta$ being
sufficiently large (determined by the considered network setting). This is
overcome by the second perturbation, which is stabilizing for each feasible
$\theta$. However, it comes with the additional constraint
\begin{equation}
\frac{\partial h_{0}}{\partial x_{i}}(\bm{x})\geq\epsilon x_{i},\quad\forall
i\in\mathcal{M},\label{eq:CondLog}%
\end{equation}
which is a significant limitation on the space of functions that can be chosen
as $h_{0}$ (\inma{apart from the additional condition that $\nabla h_{0}$
has to be Lipschitz-continuous}). \inma{Moreover, we would like to emphasize that,
beyond these specific two examples, there is no general framework that can be
applied to explore the stability properties of other choices of perturbations.
This is now provided with the $\mu$-MaxWeight framework.}

\section{$\mu$-MaxWeight}

\label{sec:stability}

In what follows, we consider {and investigate} scheduling
policies of the form%
\begin{equation}
\mathbf{u}^{\ast}(\mathbf{x})=\underset{{\mathbf{u}}\in\mathbb{R}_{+}%
^{n}:\bm{Cu}\leq\bm{1}}{\arg\min}\langle\boldsymbol{\mu}\left(  \mathbf{x}%
\right)  ,{\mathbf{Bu}+\bm{\alpha}}\rangle,\label{eqn:muMW}%
\end{equation}
where $\boldsymbol{\mu}:\mathbb{R}_{+}^{m}\rightarrow\mathbb{R}_{+}^{m}$ is a
vector valued function, and $\boldsymbol{\mu}(\bm{x})$ is called the
\emph{weight vector} for an instantaneous queue state $\mathbf{x}\in
\mathbb{R}_{+}^{m}$. Note that $\bm{\mu}$ is reminiscent of a vector field,
and it can thus be interpreted as a \emph{scheduling field} for which we
{subsequently} present a stability characterization. By
construction of the policy, we can, without loss of generality, normalize the
weight vector as%
\begin{equation}
\boldsymbol{\bar{\mu}}(\mathbf{x}):=\frac{\boldsymbol{\mu}(\mathbf{x}%
)}{\left\Vert \boldsymbol{\mu}(\mathbf{x})\right\Vert _{1}}%
,\label{eqn:normalize}%
\end{equation}
and hence $\left\Vert \boldsymbol{\bar{\mu}}(\mathbf{x})\right\Vert _{1}=1$.
Furthermore, we assume that the resulting policy is non-idling, i.e.,
$\left\Vert \boldsymbol{\mu}(\mathbf{x})\right\Vert _{1}=0$ if and only if
$\mathbf{x}=\mathbf{0}$. Below, we
{provide} generalized sufficient
conditions for throughput optimality of the systems \eqref{eqn:myCRW} and
\eqref{eqn:CRW} under control policy \eqref{eqn:muMW}.

\begin{theorem}
\label{thm:theorem1} Consider the queueing system (\ref{eqn:myCRW}), driven by
the control policy (\ref{eqn:muMW}) with some scheduling field
$\boldsymbol{\mu}$. The policy is throughput optimal (\inma{with respect to
the Definition \ref{definition_throughput_optimal}}) if the corresponding
normalized scheduling field given in \eqref{eqn:normalize} fulfills the
following conditions:

\begin{enumerate}
\item[(A1)] Given any $0<\epsilon_{1}<1$ and $C_{1}>0$, there is some
$B_{1}>0$, so that for any $\Delta\mathbf{x\in}\mathbb{R}^{m}$ with
$\left\Vert \Delta\mathbf{x}\right\Vert \mathbf{<}C_{1}$, we have $\left\vert
\bar{\mu}_{i}\left(  \mathbf{x}+\Delta\mathbf{x}\right)  -\bar{\mu}_{i}\left(
\mathbf{x}\right)  \right\vert \leq\epsilon_{1}$, for any $\mathbf{x}%
\in\mathbb{R}_{+}^{m}$ with $\left\Vert \mathbf{x}\right\Vert >B_{1}$, and
$\forall i\in\mathcal{M}$.

\item[(A2)] Given any $0<\epsilon_{2}<1$ and $C_{2}>0$, there is some
$B_{2}>0$, so that for any $\mathbf{x}\in\mathbb{R}_{+}^{m}$ with $\left\Vert
\mathbf{x}\right\Vert >B_{2}$ and $x_{i}<C_{2}$, we have $\bar{\mu}%
_{i}(\mathbf{x})\leq\epsilon_{2}$, for any $i\in\mathcal{M}$.
\end{enumerate}

Moreover, for any stabilizable arrival process, the queueing system is
f-stable under the given policy according to Definition
\ref{definition_f_stable}. The exact formulation of $f$ depends on the field
$\boldsymbol{\bar{\mu}}(\mathbf{x})$.
\end{theorem}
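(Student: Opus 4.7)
The plan is to prove $f$-stability via a Foster--Lyapunov drift argument and to deduce throughput optimality from it. First I would use the assumption that $\boldsymbol{\alpha}$ is stabilizable to produce, by a standard capacity-region argument, a feasible (possibly randomized) control $\mathbf{u}^{\dagger}$ with $\mathbf{C}\mathbf{u}^{\dagger}\leq\mathbf{1}$ and $\mathbf{B}\mathbf{u}^{\dagger}+\boldsymbol{\alpha}\leq -\gamma\mathbf{1}$ componentwise, for some $\gamma>0$. Since $\mathbf{u}^{\ast}(\mathbf{x})$ is chosen to minimize $\langle\boldsymbol{\mu}(\mathbf{x}),\mathbf{B}\mathbf{u}+\boldsymbol{\alpha}\rangle$ over the feasible set, this yields the key pointwise inequality
\[
\langle\bar{\boldsymbol{\mu}}(\mathbf{x}),\mathbf{B}\mathbf{u}^{\ast}(\mathbf{x})+\boldsymbol{\alpha}\rangle \;\leq\; \langle\bar{\boldsymbol{\mu}}(\mathbf{x}),\mathbf{B}\mathbf{u}^{\dagger}+\boldsymbol{\alpha}\rangle \;\leq\; -\gamma,
\]
which will serve as the source of the negative drift outside a neighborhood of the origin.

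Next, I would introduce a Lyapunov function $V:\mathbb{R}_{+}^{m}\to\mathbb{R}_{+}$ built from the scheduling field, for example via a radial primitive such as $V(\mathbf{x})=\int_{0}^{1}\langle\boldsymbol{\mu}(t\mathbf{x}),\mathbf{x}\rangle\,dt$, arranged so that $\nabla V(\mathbf{x})$ agrees with $\boldsymbol{\mu}(\mathbf{x})$ to leading order as $\Vert\mathbf{x}\Vert\to\infty$. The one-step conditional drift $\Delta V(\mathbf{x}):=\mathbb{E}[V(\mathbf{Q}(t+1))-V(\mathbf{Q}(t))\mid\mathbf{Q}(t)=\mathbf{x}]$ would then be expanded by a mean-value identity applied to the increment $\mathbf{Q}(t+1)-\mathbf{Q}(t)$, producing three contributions: (i) a leading term $\langle\boldsymbol{\mu}(\mathbf{x}),\mathbf{B}\mathbf{u}^{\ast}(\mathbf{x})+\boldsymbol{\alpha}\rangle$, bounded above by $-\gamma\Vert\boldsymbol{\mu}(\mathbf{x})\Vert_{1}$ by the previous step; (ii) a remainder measuring the variation of $\boldsymbol{\mu}$ along the jump, which condition (A1) bounds by $\epsilon_{1}\Vert\boldsymbol{\mu}(\mathbf{x})\Vert_{1}$ once $\Vert\mathbf{x}\Vert$ is sufficiently large (absorbable by taking $\epsilon_{1}<\gamma$); and (iii) a boundary correction arising from the reflection $[\,\cdot\,]^{+}$ in \eqref{eqn:myCRW}, supported on the coordinates $i$ for which $Q_{i}(t)$ is close to zero. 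Condition (A2), which forces $\bar{\mu}_{i}(\mathbf{x})$ to be small at such coordinates, is exactly what is needed to render this correction $o(\Vert\boldsymbol{\mu}(\mathbf{x})\Vert_{1})$ uniformly outside a compact set. Combining (i)--(iii) gives $\Delta V(\mathbf{x})\leq-\frac{\gamma}{2}\Vert\boldsymbol{\mu}(\mathbf{x})\Vert_{1}$ on $\{\Vert\mathbf{x}\Vert>B\}$ for some large $B$, and the standard Foster--Lyapunov theorem then yields $\limsup_{t}\mathbb{E}[f(\mathbf{Q}(t))]<\infty$ with $f$ determined by $V$ and $\Vert\boldsymbol{\mu}\Vert_{1}$. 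Since the construction is uniform in the stabilizable $\boldsymbol{\alpha}$, throughput optimality in the sense of Definition \ref{definition_throughput_optimal} follows.

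The hardest step is (iii), the boundary correction. The reflection $[\mathbf{Q}(t)+\mathbf{B}(t+1)\mathbf{U}(t)]^{+}$ contributes a non-negative perturbation whenever some intended departure would push $Q_{i}(t)$ below zero; without careful handling, a scheduling field placing large weight on a nearly empty queue could generate a positive Lyapunov increment dominating the negative drift from (i). Quantifying this term requires precise bookkeeping of the events $\{Q_{i}(t)+[\mathbf{B}\mathbf{U}]_{i}<0\}$, expressing their contribution through the components $\bar{\mu}_{i}(\mathbf{x})$ with $x_{i}$ small, and then invoking (A2) at the appropriate scales $C_{2}$ and $\epsilon_{2}$. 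A secondary subtlety is that $\boldsymbol{\mu}$ is generally not a gradient field, so $V$ cannot be taken as an exact potential; this is precisely why (A1) is indispensable to render the mismatch $\nabla V-\boldsymbol{\mu}$ negligible at infinity. Once both technical points are handled, the remainder is a routine application of the drift criterion and of Definition \ref{definition_f_stable}.
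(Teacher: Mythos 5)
There is a genuine gap, and it sits exactly where you place the ``leading order'' claim. First, for the record: the paper itself contains no proof of Theorem~\ref{thm:theorem1} --- it is deferred wholesale to \cite{WunderJournal12} --- so your proposal can only be judged on its own soundness. Several pieces are right: the comparison of $\mathbf{u}^{\ast}(\mathbf{x})$ against a randomized $\mathbf{u}^{\dagger}$ to extract $\langle\bar{\boldsymbol{\mu}}(\mathbf{x}),\mathbf{B}\mathbf{u}^{\ast}(\mathbf{x})+\boldsymbol{\alpha}\rangle\leq-\gamma$ is the correct source of drift, and your reading of (A2) as the condition that tames the reflection $[\cdot]^{+}$ at nearly empty coordinates is the intended role of that hypothesis. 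But the Lyapunov construction does not close.

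The hypotheses constrain only the \emph{normalized} field $\bar{\boldsymbol{\mu}}$ with $\Vert\bar{\boldsymbol{\mu}}(\mathbf{x})\Vert_{1}=1$, so after normalization the per-slot drift is the $O(1)$ quantity $-\gamma$, and every error term in the drift expansion must be $o(1)$; your bounds written against $\Vert\boldsymbol{\mu}(\mathbf{x})\Vert_{1}$ are vacuous here, since Theorem~\ref{thm:theorem1} assumes nothing about the growth of the unnormalized field (growth conditions of that kind appear only in Theorem~\ref{thm:corollary1}). Measured at this scale, the radial primitive $V(\mathbf{x})=\int_{0}^{1}\langle\boldsymbol{\mu}(t\mathbf{x}),\mathbf{x}\rangle\,dt$ fails: its increment over a bounded jump $\boldsymbol{\delta}$ contains $\int_{0}^{1}\langle\bar{\boldsymbol{\mu}}(t(\mathbf{x}+\boldsymbol{\delta}))-\bar{\boldsymbol{\mu}}(t\mathbf{x}),\mathbf{x}\rangle\,dt$, and (A1) only bounds the integrand at radius $r=t\Vert\mathbf{x}\Vert$ by a modulus $\omega(r)\to 0$ \emph{with no rate} (for each $\epsilon_{1}$ the threshold $B_{1}(\epsilon_{1})$ may grow arbitrarily fast), so the increment is merely $o(\Vert\mathbf{x}\Vert)$ --- which swamps the $O(1)$ drift. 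Equivalently, $\nabla V=\bar{\boldsymbol{\mu}}$ would require $\bar{\boldsymbol{\mu}}$ to be a gradient field, while (A1)--(A2) admit fields that asymptotically depend only on the direction $\mathbf{x}/\Vert\mathbf{x}\Vert$ (a bounded displacement at radius $r$ rotates the direction by $O(1/r)$, so such fields satisfy (A1) automatically); these are generically not gradients, and the mismatch $\nabla V-\bar{\boldsymbol{\mu}}$ does not vanish at infinity. So the sentence ``(A1) is indispensable to render the mismatch negligible'' is precisely the unproven crux, and it cannot be repaired within a one-step drift with this $V$: one needs a different device, e.g.\ a $T$-step or fluid-scale drift with the weight vector frozen at $\bar{\boldsymbol{\mu}}(\mathbf{Q}(t))$, using (A1) to show the policy agrees up to $\epsilon_{1}$ with a fixed-weight MaxWeight along bounded trajectory segments --- which is consistent with the theorem's closing remark that the exact form of $f$ depends on $\bar{\boldsymbol{\mu}}$. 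Two secondary flags: Theorem~\ref{thm:theorem1} does not assume $\boldsymbol{\mu}$ is differentiable (that is the setting of Theorem~\ref{thm:corollary1}), so your mean-value expansions must be replaced by direct use of (A1) as a modulus of continuity; and ``stabilizable $\Rightarrow$ $\mathbf{B}\mathbf{u}^{\dagger}+\boldsymbol{\alpha}\leq-\gamma\mathbf{1}$ with $\gamma>0$'' holds only for rates strictly inside the throughput region and deserves at least a remark rather than an appeal to a ``standard'' argument.
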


\begin{proof}
The proof is provided in full detail in \cite{WunderJournal12}.

\end{proof}

These conditions can be significantly simplified, assuming that
$B_{ij}(t)\geq-1$ (for each $i$,$j$ and $t$), and for each $j\in
\{1,\ldots,l\}$, there exists a unique value $i_{j}\in\{1,\ldots,m\}$
satisfying
\begin{equation}
B_{ij}(t)\geq0\quad\text{a.s. }\forall i\neq i_{j}\label{eq:MeynCondition}%
\end{equation}
(which is also assumed in Theorem 1.1 of \cite{Meyn:2007}).

\begin{theorem}
\label{thm:corollary1} Consider the queueing system (\ref{eqn:CRW}) driven by
the control policy (\ref{eq:hMW}) with some cost function $h$. Suppose the
corresponding scheduling field $\boldsymbol{\mu}(\bm{x}):=\nabla h(\bm{x})$ is
continuously differentiable, and Condition (\ref{eq:MeynCondition}) on the
network topology $\{\bm{B}(\cdot)\}$ holds. Then, the following conditions are
sufficient for throughput optimality:

\begin{enumerate}
\item[(C1)] For any $\epsilon>0$, there is some $C_{1}^{\ast}>0$, so that for
all $\left\Vert \bm{x}\right\Vert \geq C_{1}^{\ast}$ holds $\left\Vert
\nabla\log(\mu_{i}(\bm{x}))\right\Vert _{1}\leq\epsilon$ \inma{or the weaker
condition (i.e. it covers more policies) }$\left\Vert \nabla\mu_{i}%
(\bm{x})\right\Vert _{1}\leq\epsilon\left\Vert \boldsymbol{\mu}%
(\bm{x})\right\Vert _{1}$ for all $i\in\mathcal{M}$.

\item[(C2)] If $x_{i}=0$, then $\mu_{i}(\bm{x})=0,\;\forall i\in\mathcal{M}$.
\end{enumerate}
\end{theorem}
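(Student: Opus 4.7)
The strategy is to derive this result from Theorem \ref{thm:theorem1} applied to the scheduling field $\boldsymbol{\mu}(\bm{x}) := \nabla h(\bm{x})$, by showing (a) that under \eqref{eq:MeynCondition} and (C2) the $h$-MaxWeight policy \eqref{eq:hMW} on the CRW \eqref{eqn:CRW} induces the same sample-path dynamics as the $\mu$-MaxWeight policy \eqref{eqn:muMW} on the CRW \eqref{eqn:myCRW} with the same weight field, and (b) that (C1) and (C2) imply the general Conditions (A1) and (A2) of Theorem \ref{thm:theorem1}. For (a), note that (C2) forces $\mu_{i_j}(\bm{x})=0$ whenever $x_{i_j}=0$, so any control $j$ whose unique drain target is an empty queue has its only potentially negative objective contribution wiped out, while the off-diagonal entries $B_{ij}\ge 0$, $i\ne i_j$, of \eqref{eq:MeynCondition} guarantee a non-negative contribution in all other components; hence the minimizer of \eqref{eq:hMW} automatically lies in the relaxed feasible set used by \eqref{eqn:muMW}, and the $[\cdot]^{+}$ projection in \eqref{eqn:myCRW} is never activated.

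To verify (A1), I would differentiate $\bar{\mu}_i=\mu_i/\|\boldsymbol{\mu}\|_1$ using the quotient rule, yielding
\begin{equation*}
\frac{\partial\bar{\mu}_i}{\partial x_j} \;=\; \frac{1}{\|\boldsymbol{\mu}\|_1}\,\frac{\partial\mu_i}{\partial x_j} \;-\; \bar{\mu}_i\,\frac{1}{\|\boldsymbol{\mu}\|_1}\sum_{k\in\mathcal{M}}\frac{\partial\mu_k}{\partial x_j}.
\end{equation*}
The weaker form of (C1), i.e.\ $\|\nabla\mu_i\|_1\le\epsilon\|\boldsymbol{\mu}\|_1$ for every $i\in\mathcal{M}$, then gives $|\partial_j\mu_i|\le\epsilon\|\boldsymbol{\mu}\|_1$ and $\bigl|\sum_k\partial_j\mu_k\bigr|\le m\epsilon\|\boldsymbol{\mu}\|_1$, so that $|\partial_j\bar{\mu}_i|\le(1+m)\epsilon$ uniformly in $j$ whenever $\|\bm{x}\|\ge C_1^{*}$. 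The mean value theorem applied along the segment from $\bm{x}$ to $\bm{x}+\Delta\bm{x}$ (of Euclidean length $\le C_1$) yields the oscillation bound needed for (A1); choosing $\epsilon$ small relative to $\epsilon_1/(C_1(1+m)\sqrt{m})$ and then $B_1\ge C_1^{*}+C_1$ closes the argument.

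For (A2), I would exploit (C2) through a line integral in the $i$-th coordinate:
\begin{equation*}
\mu_i(\bm{x}) \;=\; \int_{0}^{x_i}\frac{\partial\mu_i}{\partial x_i}\bigl(\bm{x}_{-i},s\bigr)\,ds \;\le\; \epsilon\int_{0}^{x_i}\|\boldsymbol{\mu}(\bm{x}_{-i},s)\|_1\,ds.
\end{equation*}
The same bound $|\partial_j\log\|\boldsymbol{\mu}\|_1|\le m\epsilon$ used in (A1) implies, via a Gronwall-type estimate along the segment, that $\|\boldsymbol{\mu}(\bm{x}_{-i},s)\|_1\le e^{m\epsilon C_2}\|\boldsymbol{\mu}(\bm{x})\|_1$ for $0\le s\le x_i<C_2$. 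Dividing by $\|\boldsymbol{\mu}(\bm{x})\|_1$ then gives $\bar{\mu}_i(\bm{x})\le C_2\,\epsilon\,e^{m\epsilon C_2}$, which can be made smaller than any prescribed $\epsilon_2$ by taking $\|\bm{x}\|$ large enough that (C1) supplies a sufficiently small $\epsilon$, provided $B_2\ge C_1^{*}+C_2$ so that the full integration path stays in the region where (C1) is active.

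The principal obstacle will be keeping all of the estimates uniform in the magnitude $\|\boldsymbol{\mu}\|_1$, which is only positive for $\bm{x}\ne\bm{0}$ by non-idling: because (C1) is only effective for $\|\bm{x}\|\ge C_1^{*}$, the thresholds $B_1,B_2$ supplied to Theorem \ref{thm:theorem1} must be enlarged to absorb the displacements $C_1,C_2$, and the ratio $\|\boldsymbol{\mu}(\bm{y})\|_1/\|\boldsymbol{\mu}(\bm{x})\|_1$ must be controlled by a Gronwall-type comparison rather than by pointwise bounds. A secondary subtlety, already flagged in step (a), is the reduction between \eqref{eqn:myCRW} and \eqref{eqn:CRW}: controls with off-diagonal $B_{ij}>0$ must not introduce a hidden inconsistency between the feasible sets $\mathcal{U}^{*}(\bm{x})$ and $\{\bm{u}\in\{0,1\}^{l}:\bm{Cu}\le\bm{1}\}$, which is precisely what the single-drain structure \eqref{eq:MeynCondition} combined with (C2) is designed to rule out.
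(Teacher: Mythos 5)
Your proposal is correct, but it reaches Theorem \ref{thm:corollary1} by a genuinely different route from the paper's proof in Appendix \ref{sec:Appx1}. The paper never verifies (A2) at all: it argues in a single sentence that (C2) confines the random walk to $\mathbb{R}_{+}^{m}$, so the counterpart condition of Theorem \ref{thm:theorem1} can simply be \emph{skipped} --- a shortcut that implicitly relies on knowing that (A2)'s only role inside the proof of Theorem \ref{thm:theorem1} is to ensure positivity. All of the paper's effort then goes into showing that (C1) implies (A1), first under the extra assumptions that each $\mu_{i}$ is Lipschitz and $\sum_{j}\mu_{j}(\bm{x})\rightarrow\infty$, and then in general via iterated mean-value expansions whose relative correction terms (the quantities $(A)$ and $(B)$) are shown to be zero sequences through a compactness argument: bounded sequences with at most two limit points, at which the values $\mu_{j}(\bm{x}_{\infty}^{(1)})=\mu_{j}(\bm{x}_{\infty}^{(2)})$ must coincide. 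You instead verify both hypotheses of Theorem \ref{thm:theorem1} as stated: (A1) by the quotient-rule bound $\vert\partial_{j}\bar{\mu}_{i}\vert\leq(1+m)\epsilon$ --- legitimate because $\mu_{k}\geq0$ makes $\Vert\boldsymbol{\mu}\Vert_{1}=\sum_{k}\mu_{k}$ continuously differentiable, and because the log form of (C1) implies your weaker form via $\Vert\nabla\mu_{i}\Vert_{1}=\mu_{i}\Vert\nabla\log\mu_{i}\Vert_{1}\leq\epsilon\Vert\boldsymbol{\mu}\Vert_{1}$ --- followed by the mean value theorem on a segment kept inside $\{\Vert\bm{x}\Vert\geq C_{1}^{\ast}\}$ by taking $B_{1}\geq C_{1}^{\ast}+C_{1}$; and (A2) by integrating $\partial_{i}\mu_{i}$ from the anchor $\mu_{i}(\bm{x}_{-i},0)=0$ supplied by (C2), with the Gronwall bound $\vert\partial_{i}\log\Vert\boldsymbol{\mu}\Vert_{1}\vert\leq m\epsilon$ controlling $\Vert\boldsymbol{\mu}\Vert_{1}$ along the integration path. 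Your route buys a shorter, fully quantitative argument that needs no Lipschitz or growth assumptions, treats both variants of (C1) at once, and invokes Theorem \ref{thm:theorem1} purely as a black box, whereas the paper's limit-point machinery is precisely the price it pays to drop those extra assumptions while working directly with the normalized differences; on the other hand, the paper's skipping of (A2) is cheaper than your Gronwall estimate, but only at the cost of opening up the proof of Theorem \ref{thm:theorem1}. Your step (a) --- identical boundary dynamics of \eqref{eqn:myCRW} and \eqref{eqn:CRW} under \eqref{eq:MeynCondition} and (C2), so that the $[\cdot]^{+}$ projection never activates and a minimizer of \eqref{eq:hMW} can be chosen that never serves an empty queue --- is an elaborated version of the paper's one-line positivity claim, and is glossed at a comparable (in your case slightly greater) level of rigor in both treatments.
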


\begin{proof}
\inma{The proof can be found in Appendix \ref{sec:Appx1}.}
\end{proof}

\begin{theorem}
\label{thm:corollary2} Suppose, everything is as in \inma{Theorem \ref{thm:corollary1}}. Let the
scheduling field be defined as $\boldsymbol{\mu}(\bm{x}):=\nabla h_{0}%
(\tilde{\bm{x}})$ for some given simple perturbation $\tilde{\bm{x}}$. Then,
for some $\epsilon>0$,

\begin{enumerate}
\item[(D1)] $\frac{\partial\tilde{x_{i}}}{\partial x_{i}}\text{ is
Lipschitz,\ and }\frac{\partial\tilde{x_{i}}}{\partial x_{i}}\rightarrow
\infty,x_{i}\rightarrow\infty,$

\item[(D2)] $\frac{\partial h_{0}}{\partial x_{i}}\text{ is Lipschitz,\ and
}\frac{\partial h_{0}}{\partial\tilde{x_{i}}}\left(  \tilde{\bm{x}}\right)
\geq\left(  \frac{\partial\tilde{x_{i}}}{\partial x_{i}}\right)  ^{1+\epsilon
},x_{i}\rightarrow\infty,$
\end{enumerate}

is sufficient for stability.
\end{theorem}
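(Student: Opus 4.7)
\emph{(Proof sketch / plan.)} The strategy is to show that conditions (D1) and (D2) imply the two hypotheses (C1) and (C2) of Theorem \ref{thm:corollary1}, so that throughput optimality follows by direct invocation of that theorem. The bridge is the chain rule: since $\tilde{\bm{x}}$ is a simple (coordinate-wise) perturbation with $\tilde{x}_i = \tilde{x}_i(x_i)$, the scheduling field factorizes as
\[
\mu_i(\bm{x}) = \frac{\partial h_0}{\partial \tilde{x}_i}(\tilde{\bm{x}}) \cdot \frac{\partial \tilde{x}_i}{\partial x_i}(x_i),
\]
and this representation will be differentiated throughout the argument.

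First I would verify (C2). The two simple perturbations discussed in the paper (exponential and logarithmic) both satisfy $\frac{\partial \tilde{x}_i}{\partial x_i}\big|_{x_i=0}=0$, and this property is taken as part of the definition (or imposed as an implicit requirement) of a ``simple'' perturbation; combined with the factorization above, it immediately gives $\mu_i(\bm{x})=0$ whenever $x_i=0$, so (C2) holds.

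The technical heart of the argument is verifying the weaker form of (C1), namely $\|\nabla \mu_i(\bm{x})\|_1 \le \varepsilon \|\bm{\mu}(\bm{x})\|_1$ for all sufficiently large $\|\bm{x}\|$. Differentiating the factorization yields
\begin{align*}
\frac{\partial \mu_i}{\partial x_j} &= \frac{\partial^2 h_0}{\partial \tilde{x}_j \partial \tilde{x}_i}(\tilde{\bm{x}}) \, \frac{\partial \tilde{x}_j}{\partial x_j} \, \frac{\partial \tilde{x}_i}{\partial x_i}, \quad j \neq i, \\
\frac{\partial \mu_i}{\partial x_i} &= \frac{\partial^2 h_0}{\partial \tilde{x}_i^2}(\tilde{\bm{x}}) \left(\frac{\partial \tilde{x}_i}{\partial x_i}\right)^{\!2} + \frac{\partial h_0}{\partial \tilde{x}_i}(\tilde{\bm{x}}) \, \frac{\partial^2 \tilde{x}_i}{\partial x_i^2}.
\end{align*}
The Lipschitz hypotheses in (D1) and (D2) uniformly bound the mixed second partials of $h_0$ and the second derivative $\partial^2 \tilde{x}_i / \partial x_i^2$ by some constant $L$. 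Combined with the growth estimate $\mu_i \ge (\partial \tilde{x}_i/\partial x_i)^{2+\varepsilon}$, which is an immediate consequence of (D2) together with the factorization, this lets me control each term of $\|\nabla \mu_i\|_1 / \|\bm{\mu}\|_1$: the diagonal contribution is absorbed by dividing against $\mu_i$ itself and vanishes by (D1); for off-diagonal terms with both $x_i$ and $x_j$ large, I would invoke the AM--GM-type lower bound $\|\bm{\mu}\|_1 \ge \sqrt{\mu_i \mu_j}$ to obtain a symmetric decay of order $L/[(\partial \tilde{x}_i/\partial x_i)^{\varepsilon/2}(\partial \tilde{x}_j/\partial x_j)^{\varepsilon/2}]$; and off-diagonal terms with $x_j$ bounded cause no difficulty, since $\partial \tilde{x}_j/\partial x_j$ is then bounded as well while $\mu_i \to \infty$ is available in the denominator.

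The hardest step will be the off-diagonal case analysis: a naive bound against $\mu_i$ alone or $\mu_j$ alone fails when one of $x_i$, $x_j$ is much larger than the other, so the AM--GM route through $\sqrt{\mu_i\mu_j}$ appears to be the cleanest workaround. A secondary care point is index-uniformity of the threshold $C_1^\ast$: because $\mathcal{M}$ is finite and all relevant constants ($L$, the Lipschitz bounds, the exponent $\varepsilon$ from (D2)) are index-independent, a single $C_1^\ast$ suffices for every $i\in\mathcal{M}$, at which point the hypotheses of Theorem \ref{thm:corollary1} apply and yield stability.
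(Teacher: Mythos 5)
Your proposal follows essentially the same route as the paper's proof: the chain-rule factorization $\mu_i(\bm{x})=l(x_i)\,\frac{\partial h_0}{\partial\tilde{x}_i}(\tilde{\bm{x}})$ with $l:=\frac{\partial\tilde{x}_i}{\partial x_i}$, the same diagonal/off-diagonal gradient decomposition, Lipschitz bounds on the second derivatives, and the growth condition (D2) to kill exactly the critical ratio $l(x_i)l(x_j)\big/\bigl(l(x_i)\frac{\partial h_0}{\partial\tilde{x}_i}+l(x_j)\frac{\partial h_0}{\partial\tilde{x}_j}\bigr)$ that the paper singles out as the reason the exponent $1+\epsilon$ is needed. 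Your AM--GM bound via $\sqrt{\mu_i\mu_j}$ and your explicit verification of (C2) from $l(0)=0$ merely make explicit steps the paper leaves implicit, so the argument is correct and not materially different.
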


\begin{proof}
\inma{
The proof can be found in Appendix \ref{sec:Appx2}.
}
\end{proof}

\inma{
\begin{remark}
(On necessary conditions) For the weaker part in C1) in Theorem
\ref{thm:corollary1} it was shown in \cite{ZhouISIT09}, in a wireless
broadcast setting, that the conditions are also necessary if the boundary of
the feasible (rate) region contains differentiable parts, i.e. parts where the
hyperplanes defined through the scheduling field are uniquely supported.
\end{remark}
}
\inma{
It is important to note that the conditions in Theorem \ref{thm:corollary1} and Theorem \ref{thm:corollary2}
cover indeed a larger class of throughput optimal policies compared to the
perturbation in (\ref{eq:pertLog}) as follows: Observe that for
(\ref{eq:pertLog}) we have $\frac{\partial\tilde{x_{i}}}{\partial x_{i}}%
=\log\left(  1+\frac{x_{i}}{\theta}\right)  +\frac{x_{i}}{\theta+x_{i}}$ which
is indeed 1) Lipschitz and 2) it holds $\frac{\partial\tilde{x_{i}}}{\partial
x_{i}}\rightarrow\infty,x_{i}\rightarrow\infty$. Hence, we have from D2) in
Theorem \ref{thm:corollary1} that%
\[
\frac{\partial h_{0}}{\partial\tilde{x_{i}}}\left(  \tilde{\bm{x}}\right)
\geq\log^{1+\epsilon}\left(  x_{i}\right)
\]
is required in each component which is much weaker that Meyn's condition
(\ref{eq:CondLog}) (note: both approaches require Lipschitz-continuity of
$h_{0}$!).
}
\subsection{Reducing Complexity by Pick and Compare}

\label{sec:pick-and-compare}

Usually, MaxWeight-type policies and corresponding algorithms have a high
complexity, and this also applies to a policy based on the $\mu$-MaxWeight
framework. As the number of control vectors grows exponentially in $l$, a
large computational burden arises from (\ref{eqn:muMW}), which is carried out
in every time slot. Popular approaches to tackle the complexity issue include
randomized pick-and-compare based methods \cite{Tassiulas98}, which reduce
complexity at the expense of higher delay, and Greedy/Maximal scheduling
\cite{Joo2009}, which has good delay performance but achieves only a fraction
of the throughput region (i.e., the set of all arrival rate vectors for which
the network is stabilizable).

In this paper, we apply a randomized version of $\mu$-MaxWeight, based on the
pick-and-compare approach, which preserves throughput optimality
\cite{Tassiulas98}\cite{Eryilmaz2007}. Tailored to $\mu$-MaxWeight, the
approach can be summarized as follows:
{We apply} $\bm{u}(0)\in
\mathcal{U}^{\ast}$ arbitrarily. Afterwards, in each timeslot $t>0$, we first
pick a control $\hat{\bm{u}}\in\mathcal{U}^{\ast}$ randomly. Second, we let
$\bm{u}(t)=\hat{\bm{u}}$, if $\langle
\bm{\mu}(\bm{x}),\bm{B\hat{u}}+\bm{\alpha}\rangle< \langle
\bm{\mu}(\bm{x}),\bm{Bu(t-1)}+\bm{\alpha}\rangle$, and otherwise, we choose
$\bm{u}(t)=\bm{u}(t-1)$. This algorithm is throughput optimality as long as
$\Pr\left( \bm{\hat{u}} = \bm{u}^{\ast}\right)  \geq\delta$, for $\delta>0$
\cite{Tassiulas98} (which is trivially satisfied). The reduced complexity,
however, comes at the expense of a higher convergence time. Yet, a tradeoff
can be achieved by repeatedly applying the pick and compare steps in every
particular timeslot.

Note that the randomized algorithm was used as a basis for decentralized
throughout optimal control policies \cite{Eryilmaz2007}\cite{Eryilmaz10}.

\section{Cost-Function Choice}

\label{sec:costfunctions}

A vital design choice {in the $\mu$-MaxWeight framework} is
the underlying cost-function. Different applications have different
requirements and consequently require different cost functions.
{Subsequently,} we will consider both minimum and maximum
buffer state constraints.

A straightforward choice is the $l_{1}$-norm, i.e., a \textit{linear cost
function} of the form
\begin{equation}
c(\bm{Q})=\sum_{i} c_{i}Q_{i}.\label{eq:CF-l}%
\end{equation}
It is used to minimize the total buffer occupancy,
which is corresponding to {the overall} end-to-end delay.
However, this cost function is unsuitable to avoid buffer underflows, since it
does not penalize buffer states below the target level.

Assume we want to find a cost function that steers the particular buffer
levels towards a target buffer state $\tilde{Q}$. A simple cost function
choice that penalizes deviations from target buffer state $\tilde{Q}$ in both
directions is the \textit{shifted quadratic cost function}
\begin{equation}
c(\bm{Q})=\sum_{i} c_{i}(Q_{i}-\tilde{Q})^{2}.\label{eq:CF-q}%
\end{equation}
However, (\ref{eq:CF-q}) naively treats all buffers in the network equally,
although, most likely, only application buffers have minimum state constraints.

Therefore, we combine \eqref{eq:CF-l} and \eqref{eq:CF-q}, such that only
application buffers have quadratic cost terms. Thereby, we obtain the
\textit{composite cost function}, given by
\begin{equation}
c(\bm{Q})=\sum_{i\in\mathcal{I}_{u}} c_{i}(Q_{i}-\tilde{Q})^{2} +
\sum_{j\notin\mathcal{I}_{u}} c_{j}Q_{j},\label{eq:CF-m}%
\end{equation}
where $\tilde{Q}$ denotes the desired target level for application buffers,
and $\mathcal{I}_{u}$ is the set of application buffer indices. For
simplicity, we assume all application buffers have the same target buffer level.

\subsection{Policy Design}

\label{sec:policydesign}

Having chosen an appropriate cost function, we now show how to construct a
corresponding weight function $\bm{\mu}$ in \eqref{eqn:muMW}. In order to
guarantee stability, {the} weight function needs to fulfill
the stability conditions of Theorem \ref{thm:theorem1}. For this purpose, we
employ a perturbation technique (cf. \cite{Meyn:2007}). In \cite{Wunder:2012},
we pointed out that a simple way to construct a weight function is
\begin{equation}
\bm{\mu}(\bm{x})=\mathbf{P}_{\theta}\left(  \mathbf{x}\right)  \nabla
h_{0}(\bm{x}),\label{eq:weight_function_construction}%
\end{equation}
where a \textit{perturbation matrix} $\mathbf{P}_{\theta}\left(
\mathbf{x}\right)  :=\mathbf{diag}\left(  1-\exp{\left(  -\frac{x_{i}}%
{\theta(1+\sum_{j\neq i}x_{j})}\right)  }\right)  $ is used. It is based on
the following perturbation of variables:
\begin{equation}
\tilde{x}_{i}:=x_{i}+\exp{\left(  -\frac{x_{i}}{\theta(1+\sum_{j\neq i}x_{j}%
)}\right)  }.\label{eq:pert2}%
\end{equation}
\inma{
It can be easily verified that the conditions from Theorem
\ref{thm:corollary1} hold for our cost function choice (\ref{eq:CF-m}) which
results in the policy%
\begin{equation}
\mu_{i}(\bm{x})=2c_{i}\left(  {x_{i}}-\bar{q}\right)  \left(  1-\exp{\left(
-\frac{x_{i}}{\theta(1+\sum_{j\neq i}x_{j})}\right)  }\right)
.\label{eq:cf_m_policy}%
\end{equation}
The gradient is given by $g(\bm{x}):=\exp({-x_{i}/}\theta(1+\sum_{j\neq
i}x_{j}))$%
\[
\left[  \nabla\mu_{i}(\bm{x})\right]  _{k}=\left\{
\begin{array}
[c]{cc}%
2c_{i}g(\bm{x})+\frac{2c_{i}\left(  x_{i}-q\right)  g(\bm{x})}{\theta
(1+\sum_{j\neq i}x_{j})} & k=i\\
\frac{-2c_{i}\left(  x_{i}-q\right)  x_{i}g(\bm{x})}{\theta(1+\sum_{j\neq
i}x_{j})^{2}} & else
\end{array}
\right.
\]
and we can see that it contains functions like $ye^{-y}$ or $y^{2}e^{-y},y>0,$
where $y(\bm{x}):={x_{i}/}\theta(1+\sum_{j\neq i}x_{j})$ which at best just
cannot decrease for any path $y(\bm{x})$ in the $m$-dimensional positive
orthant. On the other hand, we have from (\ref{eq:cf_m_policy}) that
$\left\Vert \boldsymbol{\mu}(\bm{x})\right\Vert _{1}$ grows at least linear in
any of the $x_{i}$ so that we have proved $\left\Vert \nabla\mu_{i}%
(\bm{x})\right\Vert _{1}\leq\epsilon\left\Vert \boldsymbol{\mu}%
(\bm{x})\right\Vert _{1}$ which is exactly the condition of throughput
optimality in Theorem \ref{thm:corollary1}.
}

As we will demonstrate now, the choice of the underlying cost function is
crucial for the performance of the resulting control policy.
{However, it should be pointed out that, in network design, it is important to keep the practical limitations of a control framework in mind.
For example, although, owing to the cost-funtion based approach, the end-to-end delays can be better controlled than in classical MaxWeight, hard delay guarantees cannot be given. This may limit the application of the control framework for certain delay-critical network applications (e.g., in the context of transport networks with so-called mission-critical communication traffic).}

\subsection{Example: Controlling a network of queues in tandem}

\label{sec:tandem}

Let us first consider a very simple network, known as tandem queue, comprising
a number of $m$ buffers in series.
We assume traffic arrives at the first buffer with mean rate $\alpha$, and the
application removes traffic from the $m$'th buffer at a constant rate $R_{a}$
(this models, e.g., a streaming service). The output of buffers $1$ through
$m-1$ can be regulated by the control policy. While tandem queue networks have
been thoroughly investigated (see, e.g., \cite{MeynBook2007}), we have two
additional difficulties here. First, we have no explicit control over the rate
at which data is extracted from application buffer $Q_{m}$. Second, in
addition to stability, we require also a minimum buffer state at the
application buffer.

Consider now the most simple network of $m=2$. Here, the only remaining
control decision is whether to send traffic from $Q_{1}$ to $Q_{2}$, or not.
Figure \ref{fig:cfs_tandem} shows, how we can steer the buffer state towards a
target buffer level, using an appropriate cost function. It depicts
{the} queueing trajectories
{of the two queues in the tandem network, i.e., the buffer sizes over time,}
(exponentially averaged over a window of 100 time slots to increase
readability) for 20000 time slots using two different cost functions. The
dashed lines represent policies based on the simple linear cost function in
(\ref{eq:CF-l}), which, obviously, does not stop the second buffer from
growing. By contrast, the solid lines represent the composite cost function
(\ref{eq:CF-m}), which stops $Q_{2}$ from sending further traffic to $Q_{2}$
when the latter reaches a certain level. Instead, the excessive traffic is
buffered at $Q_{1}$, since it generates lower costs. \begin{figure}[ptb]
\centering
\includegraphics[width=0.6\linewidth]{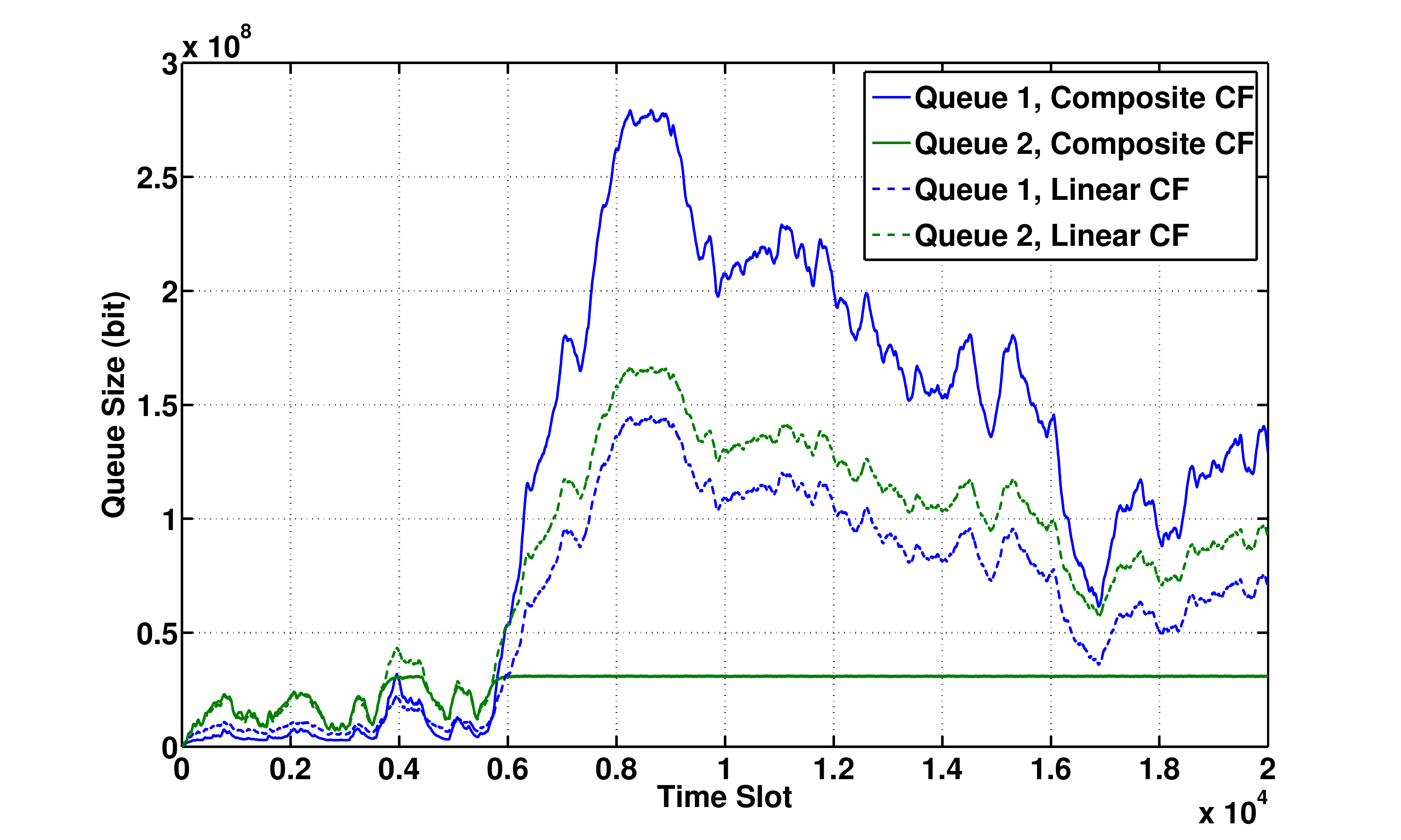}  \caption{Queue
trajectories
{(i.e., number of bits in a particular queue at a particular time slot)}
{obtained with the composite and linear}
cost-functions in {a} tandem queue network
{(i.e., a network of two buffers in series)}.
{The trajectories are thereby exponentially averaged over a time horizon of 100 time slots.}}%
\label{fig:cfs_tandem}%
\end{figure}

\section{Applications}

\label{sec:applications}

In the following, we discuss several examplary fields of application for the
control framework:

(1) the control of a multimedia network carrying streaming traffic,  (2) the
design of an energy efficient policy for energy constrained networks, and  (3)
the cross-layer control of wireless networks for interference management.

\subsection{Application I: Control of multimedia networks}

\label{sec:simulations}

As a first practical application for our framework, we  evaluate the $\mu
$-MaxWeight approach in a network designed for entertainment purposes.
Therefore, the network is expected to carry mostly streaming traffic (and
can, for example, be used to model a wireless entertainment system in  an
aircraft cabin, as in \cite{KasparickFNMS12}). Figure
\ref{fig:syst_mod_schematic} depicts the considered network schematically.
\begin{figure}[ptb]
\centering
\includegraphics[width=0.6\linewidth]{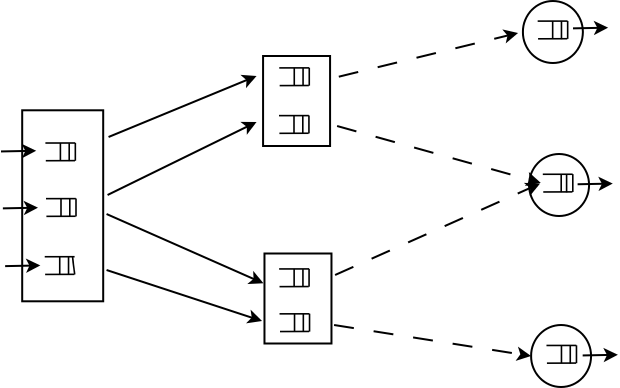}
\caption{Schematic representation of considered multimedia
network.
{A central application server is depicted (left) with three queues, corresponding to the three user terminals (right). In between, we have two access points, where each can serve two of the three users. Dashed lines indicate wireless connections, solid lines indicate fixed wired connections. }}%
\label{fig:syst_mod_schematic}%
\end{figure}We have several wireless access points
{(depicted as smaller rectangles in Figure \ref{fig:syst_mod_schematic})},
each serving a certain number of user terminals in its area
{(depicted as circles in Figure \ref{fig:syst_mod_schematic})}.
The user terminals are assumed to run streaming-based applications. Some
terminals, which are located in between several access points, can potentially
be served by more than one access point. The access points are connected by a
wired backbone network to a central application server
{(depicted as large rectangle in Figure \ref{fig:syst_mod_schematic})}.
The server itself is connected to the Internet, thus, traffic for each user
arrives in a random fashion. Each component in the system has a number of
queues with different requirements.

To account for the anticipated multimedia applications, we define the notion
of \emph{queue outage} as a measure of a control policy's performance (in
addition to the average cost performance metric). We denote the total number
of buffer underflow events and buffer overflow events up to time $T$ as
$F^{\text{min}}_{i}(T)$ and the $F^{\text{max}}_{i}(T)$, respectively. The
total sum of buffer outages is consequently $\bar{F}^{\text{out}}(T) =
\sum_{i\in\mathcal{I}_{u}}\left( F^{\text{min}}_{i}(T) + F^{\text{max}}%
_{i}(T)\right) $, and the relative frequency of queue outage events is defined
as
\begin{equation}
\bar{P}^{\text{out}}(T) = \frac{1}{T\cdot|\mathcal{I}_{u}|} \bar
{F}^{\text{out}}(T).\label{eq:queueOutEv}%
\end{equation}
Assume, we want to keep the buffer states between a minimum buffer state
$Q^{(1)}$ and a maximum buffer state $Q^{(2)}$, which can be flexibly adopted
to the requirements of the desired application. Then, a reasonable choice for
the target buffer state is $\tilde{Q} = \frac{1}{2}\left( Q^{(1)}%
+Q^{(2)}\right) $.

Usually, queueing network models, such as the CRW model, assume static links.
However, we require wireless links between access points and terminals. For
this, we model the wireless link capacities by applying a result from
\cite{Collings08}, which determines the mutual information distribution of a
multi-antenna {OFDM}-based wireless system. To obtain rate expressions, we use
(similar to \cite[Sec. IV.B]{Collings08}) the notion of outage capacity
$\mathcal{I}_{\text{out},p_{o}}$, which is given by \cite[Eq. (26)]%
{Barriac04}
\[
\mathcal{I}_{\text{out},q} = \mathbb{E}[\mathcal{I}_{\text{OFDM}}]
-\sqrt{\mathrm{Var}[\mathcal{I}_{\text{OFDM}}]}Q^{-1}(p_{o}),
\]
for a given outage probability $p_{o}$, and with $Q(\cdot)$ being the Gaussian
Q-function. This outage probability (not to be confused with the queue outage
in \eqref{eq:queueOutEv}) is defined as the maximum rate that is guaranteed to
be supported for $100(1-p_{o})\%$ of the channel realizations. $\mathbb{E}%
[\mathcal{I}_{\text{OFDM}}]$ and $\mathrm{Var}[\mathcal{I}_{\text{OFDM}}]$ are
determined according to \cite{Collings08}.

Subsequently, we evaluate the earlier defined cost functions with respect to
the queue outage performance, and with varying traffic intensities, in order
to assess the robustness of the different cost functions. In particular, we
show results (after $T=100000$ time slots) in a network according to Figure
\ref{fig:syst_mod_schematic}, with 3 access points and 10 users per access
point. We assume the capacity of the wired links is $100\,\mathrm{Mbit/s}$,
the target buffer size is $\tilde{Q}=20\,\mathrm{Mbit/s}$, and the rate at
which the application drains the user buffers is $3\,\mathrm{Mbit/s}$.
Moreover, we allow a wireless link outage probability of $p_{o}=0.01$.
Moreover, we apply $100$ pick-and-compare iterations per timeslot (cf. Section
\ref{sec:pick-and-compare}).

First, we consider the buffer underflow probability, since our main motivation
is to prevent service interruptions due to low buffers. Since the application
drains the application queues at a fixed rate, intuitively, one can expect
that at traffic lower than this value, the influence of underflows dominates,
while at traffic rates larger than this value overflows are more likely to
occur. Consider Figure \ref{fig:compCostFun3}, which compares the underflow
probability of various cost functions, and, as a baseline, that of MaxWeight.
When the mean arrival rate equals the application service rate $R_{a}$ (or is
larger), all policies produce low underflow frequencies, however, we already
observe small gains from cost functions based approach. The gain significantly
grows when the arrival rates are below $R_{a}$. While MaxWeight and the linear
{CF} \eqref{eq:CF-l} show almost the same high underflow frequency (since both
policies do not penalize low buffer states at the application buffers),
already the shifted quadratic {CF} \eqref{eq:CF-q} shows improvements. The
best performance is obtained with the composite CF in \eqref{eq:CF-m}, which
assign different costs to application buffers and non-application buffers.

However, not only buffer underflows have to be avoided, but it may be
desirable to avoid large queues as well. Therefore, we subsequently
investigate the performance with respect to the queue outage probability
defined in \eqref{eq:queueOutEv}. Figure \ref{fig:compCostFun2} summarizes the
results. \begin{figure}[ptb]
\centering
\subfigure[Relative frequencies of buffer underflow events.]  {
\includegraphics[width=0.48\linewidth]{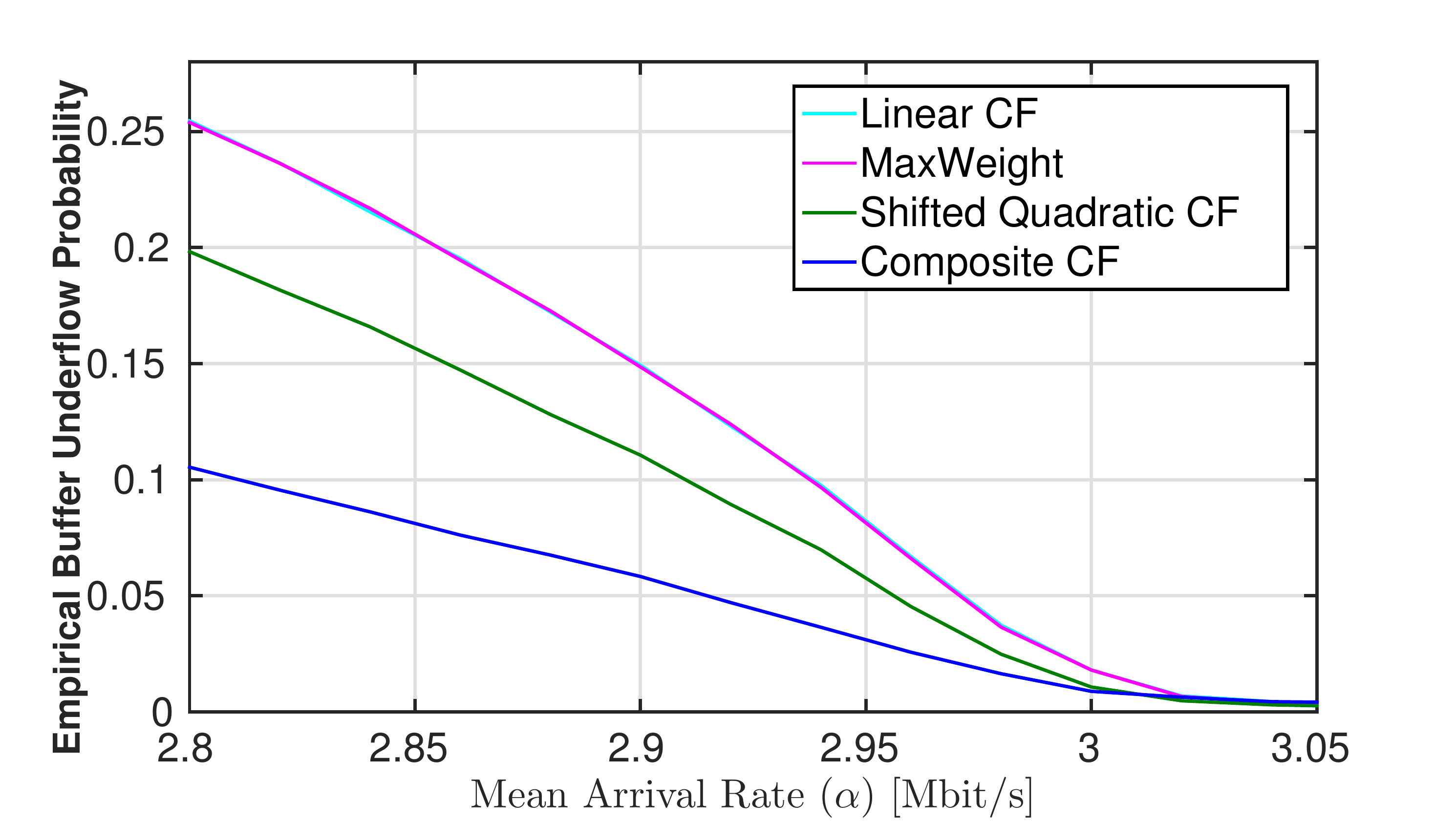}
\label{fig:compCostFun3}  }
\subfigure[Relative frequencies of queue outage events.]  {
\includegraphics[width=0.48\linewidth]{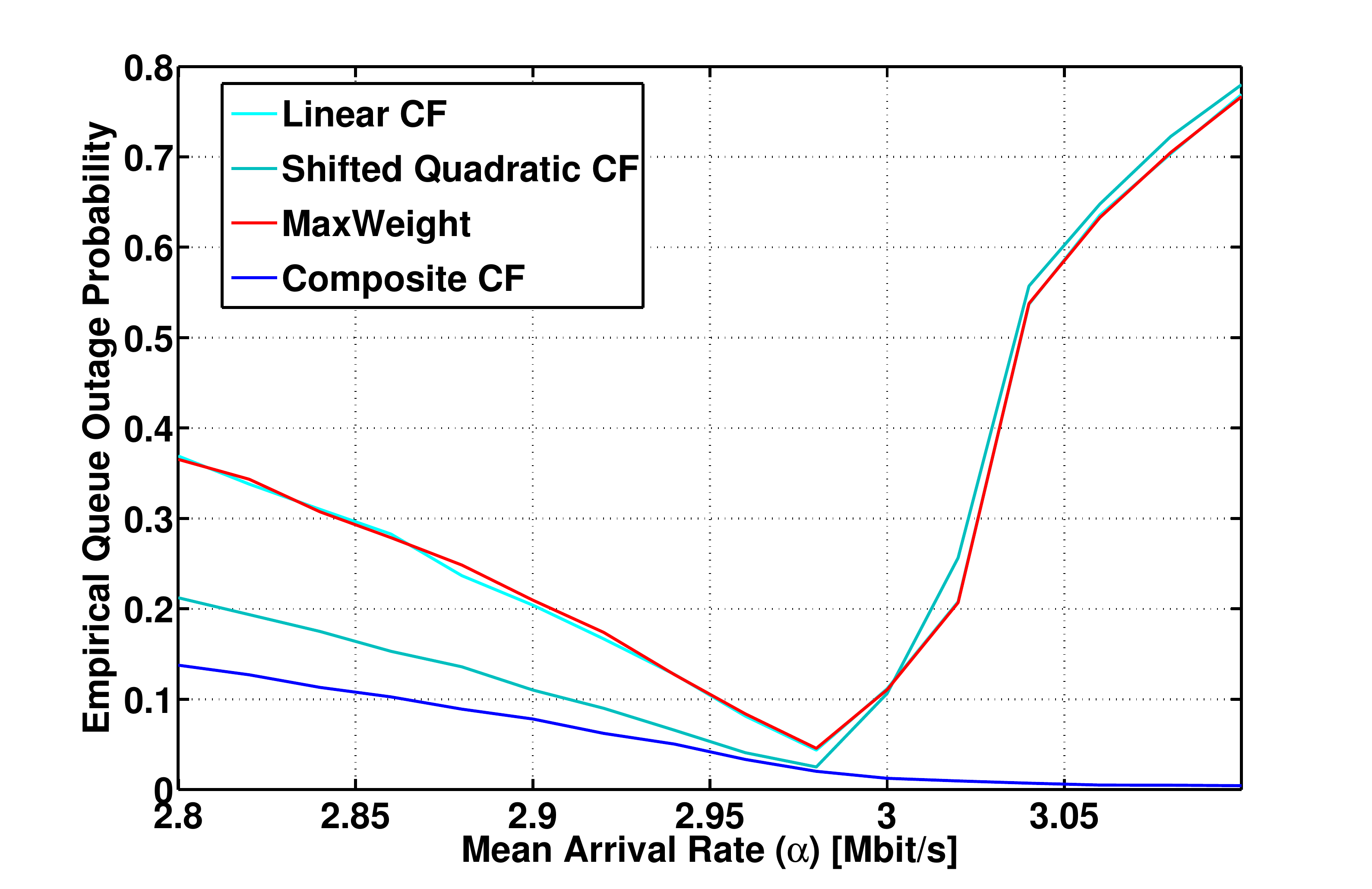}
\label{fig:compCostFun2}  }\caption{Performance results after 100000 time
slots obtained by different cost functions
{with respect to buffer underflows (left) and queue outages (right). The target buffer size is $\tilde{Q}=20\,\mathrm{Mbit/s}$,
and the application rate $R_a = 3\,\mathrm{Mbit/s}$. }}%
\label{fig:perf_res_cf}%
\end{figure} As long as the arrival rates are below $R_{a}$, all cost
functions produce a decreasing outage frequency when the arrival rates
increase, mainly since queue underflows are less frequent. Beyond the
applications service rate, the policies that are not based on a sophisticated
cost function rapidly increase the queue outages with increasing traffic,
owing to a higher number of overflows. Only the composite cost function can
further decrease the queue outages, since exceeding traffic is stored at
buffers that generate lower costs.

\subsection{Application II: Energy Efficient Network Control}

\label{sec:EnergyEfficiency}

As another application, we subsequently demonstrate how the $\mu$-MaxWeight
framework can be used for energy efficient (``green'') routing. In particular,
wireless ad hoc networks, such as sensor networks, often comprise energy
limited, battery operated nodes. In energy constrained networks, routing
policies may {require significant modifications}, such that they
extend the lifetime of nodes.
{Although a considerable amount of work}
 has been {conducted}
 on energy efficient
routing, only few {studies}
consider the combination of energy efficiency and throughput optimality, e.g.,
\cite{Neely_Energy_it}.

Consider the exemplary system depicted in Figure \ref{fig:systemEnEff}. We
have a source of data, a corresponding destination, and three different routes
in between (a similar topology was investigated, e.g., in \cite{Zhu2011}).
\begin{figure}[ptb]
\centering
\includegraphics[width=0.6\linewidth ]{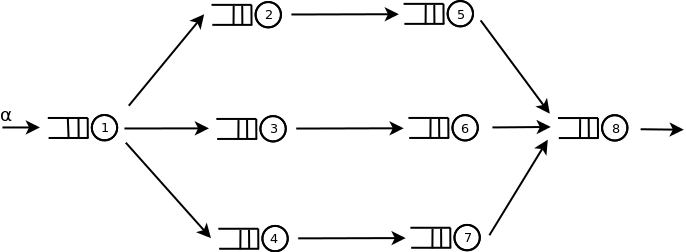}
\caption{Network of battery powered nodes
{with unidirectional links}.
{Traffic enters the network at node 1, and leaves the network at node 8.}
{A node can switch to an energy saving mode, if its queue is empty. Therefore, to be}
energy efficient, {the} control policy should route the
traffic from source to destination {over}
 as few {nodes} as possible.}%
\label{fig:systemEnEff}%
\end{figure}{Let us} assume that the intermediate notes are
battery powered devices. Moreover, {we} assume that a node can
switch to an energy saving mode whenever there is no traffic in its queues. In
this case, an energy efficient routing algorithm tries to maximize the overall
total idle time. However, as in the previous application, stability shall be
guaranteed nevertheless. Naturally, when the network load is low, a suitable
control policy should route as much traffic as possible using only a single
route. This allows the devices that constitute the other routes to be switched
off. Only if the network load is high, additional routes should be used to
avoid queue instability.

This required behaviour is certainly not achieved when {the}
MaxWeight {policy} is used, which, by contrast,
 uses all routes in an attempt to balance buffer
states. Therefore, we apply the $\mu$-MaxWeight framework with a suitable cost
function in order to obtain the desired behaviour at low traffic load. A
straight forward cost function choice in this setup is the linear cost
function \eqref{eq:CF-l}. Let $c_{i}$ be the cost factor associated with the
buffer of node $i$. In order to force the traffic towards the central route,
we should choose our cost weights as  $c_{j} >> c_{i}$, for $j\in\{2,4,5,7\},
i\in\{3,6\}$.

We simulate the system of Figure \ref{fig:systemEnEff} with the following
parameters. The arrival rate $\alpha$ to the queue associated with node 1 is
varied in the interval of $\alpha\in[0.1,0.5]$. All other queues have no
external arrivals. As a benchmark, we use the MaxWeight policy again.
 The results {are} depicted in
Figure \ref{fig:eneff-results}. The figure shows the total aggregated idle
time, defined as
\begin{equation}
\sum_{t=1}^{T} \sum_{m} \mathbb{I}\{q_{m}(t)=0\}\label{eq:SumIdle}%
\end{equation}
over the load $\alpha$. Thereby, we consider only the energy-limited
intermediate nodes, that is, $m\in[2,7]$ in \eqref{eq:SumIdle}. In Figure
\ref{fig:eneff-results-eq}, we set the capacities of all links to $0.5$, while
in Figure \ref{fig:eneff-results-neq} we set the capacities of the central
route higher than those of the upper and lower route.

\begin{figure}[ptb]
\centering
\subfigure[Equal Link Capacities]{\label{fig:eneff-results-eq}\includegraphics[width=0.48\linewidth ]{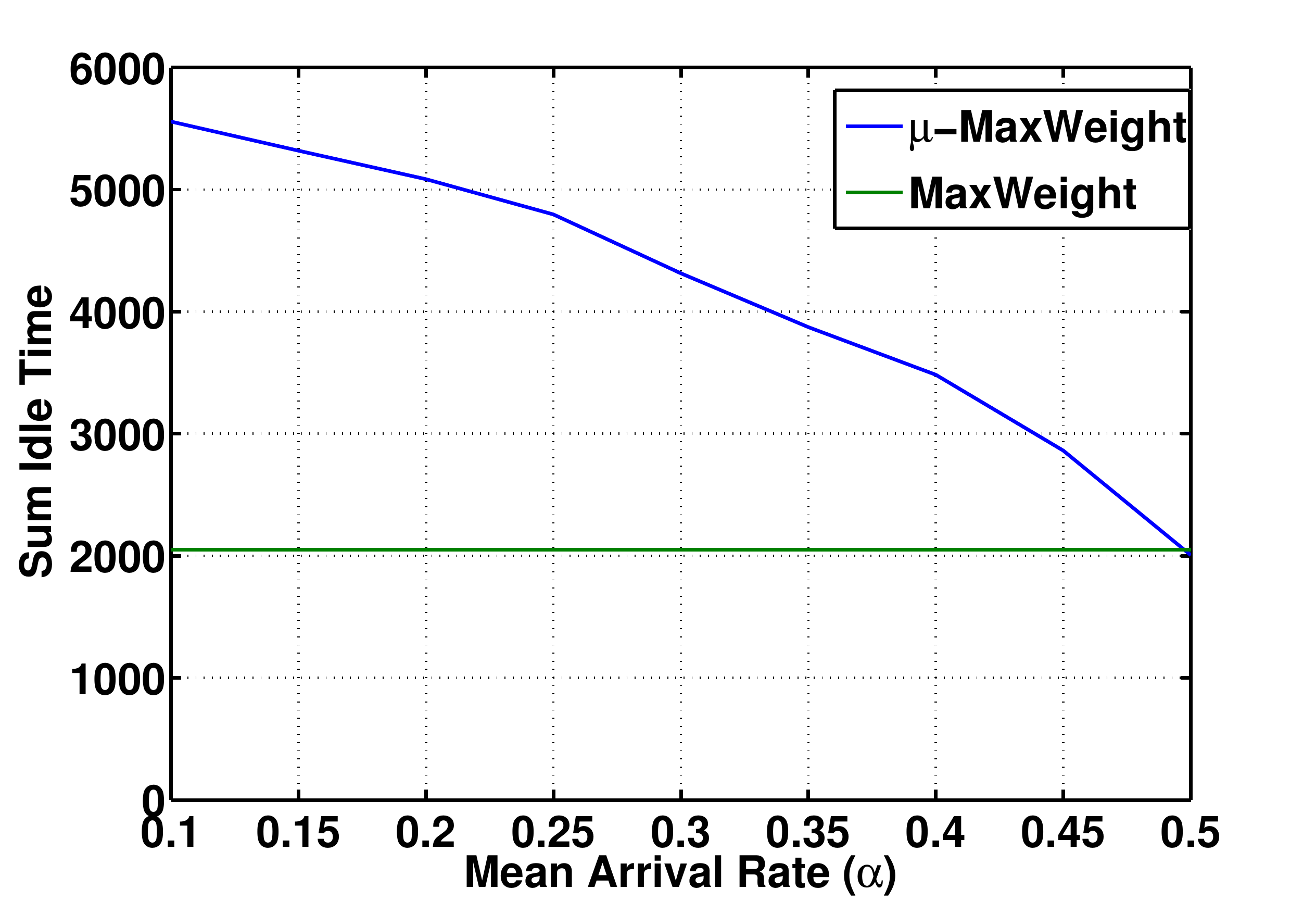}}
\subfigure[Unequal Link Capacities]{\label{fig:eneff-results-neq}\includegraphics[width=0.48\linewidth ]{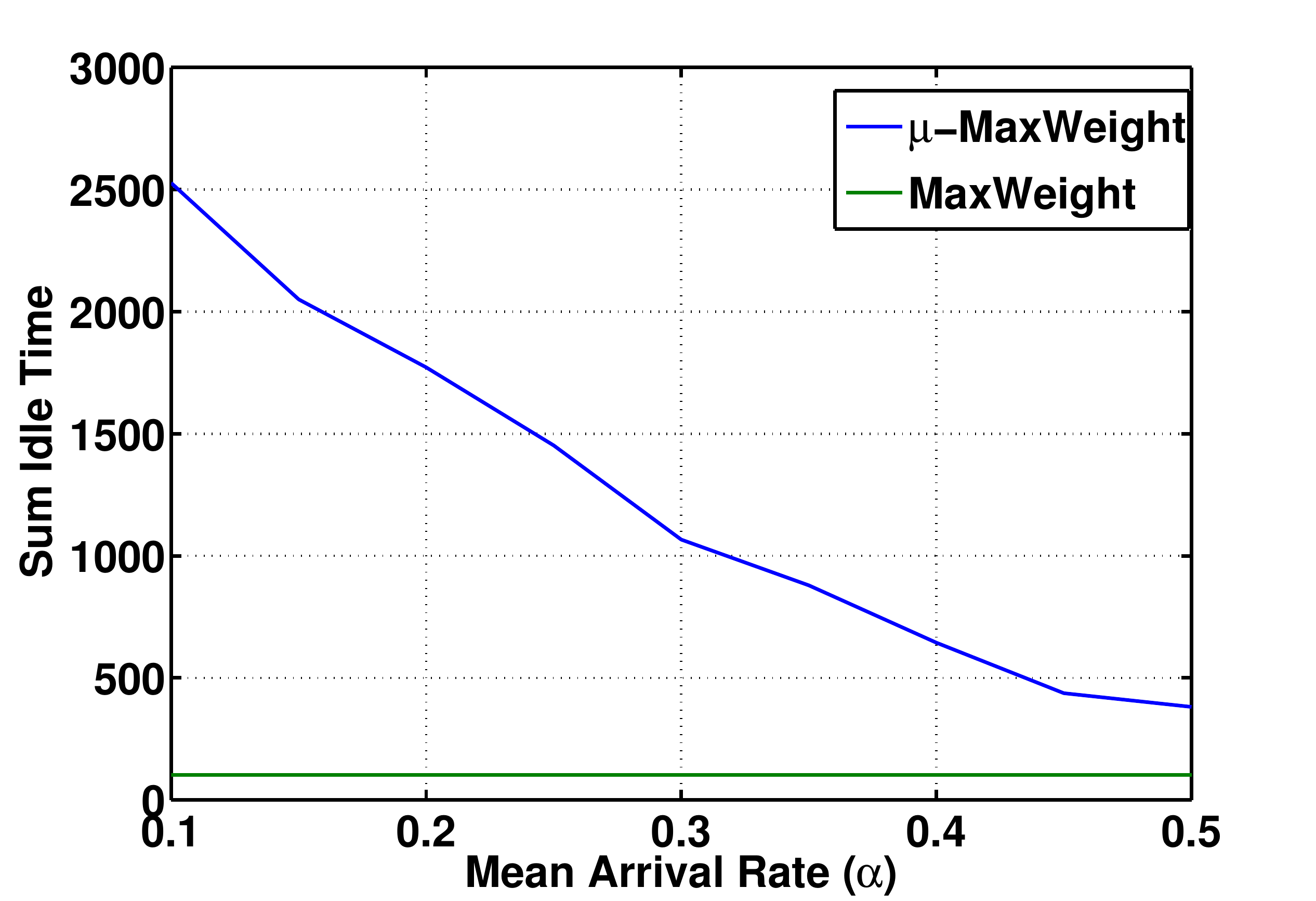}}
\caption{Sum idle time
{(aggregated over all relevant queues)} as a measure of energy
efficiency, {plotted} over
{the average arrival rate $\alpha$}.
{The performance of the $\mu$-MaxWeight policy with a linear cost function is compared to the MaxWeight baseline algorithm.}
}%
\label{fig:eneff-results}%
\end{figure}

Clearly, the proposed control framework outperforms MaxWeight in terms of
energy efficiency. Even in case of equal routes, where the relative gains are
smaller, the performance is still significantly better. Note that as the
traffic load increases, the behaviour of $\mu$-MaxWeight gets closer to that
of MaxWeight, which is necessary to ensure stability.

\subsection{Application III: Cost-Based Cross-Layer Control}

\label{sec:WSA}

As a last example, we show how the proposed network control framwork can be
combined with physical layer resource allocation in a common cross-layer
control problem. While in network models  with
fixed link capacities the coupling of link rates due to interference is
usually not considered, we {here} assume a fully coupled
wireless network, where nodes treat interference as noise. To cope with the
non-convex power control problem, we apply the SCA method from
\cite{Papandriopoulos2009}. In \cite{Matskani2012_TSP}, cross-layer control
based on MaxWeight was already successfully combined with the SCA technique.

Consider a wireless multihop network comprising a set of nodes $\mathcal{N}$.
Traffic can potentially be generated at all nodes in the network and is
categorized by its destination. That is, traffic for node $c$ is called
commodity-$c$ traffic. Nodes are connected by links $l\in\mathcal{L}$, which
are defined as ordered node tuples $(i,j)$, with $i,j\in\mathcal{N}$.
Furthermore, we denote $\mathcal{O}(j)$ the set of all outgoing links of node
$j$. The set of all incoming links is accordingly defined as $\mathcal{I}(l)$.
We define $o(l)$ as the transmitter of link $l$ and $d(l)$ as the receiver of
link $l$. Let $g_{lm}$ be the path gain from node $o(l)$ to node $d(m)$. Thus,
the gain of link $l$ is denoted $g_{ll}$. Accordingly, let $d_{ij}$ be the
distance between node $i$ and node $j$. A frequently used simplification for
static gains is that, depending on their distance, the link between two nodes
can be modeled as:  $g_{ij} = \frac{1}{\left( d_{ij}\right) ^{\rho}}$, with
$\rho\in\mathbb{R}$ being the path-loss exponent. Below, we assume a path-loss
exponent $\rho=4$, which corresponds to a rather lossy environment (e.g., an
urban or sub-urban environment).

Each node $j$ assigns a power value $p_{l}$ to each of its outgoing links
$l\in\mathcal{O}(l)$. Let $\bm{p}\in\mathbb{R}^{L}$ be the vector comprising
the current power allocation of all $L$ links in the network. Here, we
consider per-node power constraints, where all nodes have a common maximum
available power value of $P_{\text{max}}$, thus  $\sum_{l\in\mathcal{O}%
(l)}p_{l} \leq P_{\text{max}}$. However, all algorithms can be easily applied
in case of per-link power constraints as well. Let
\begin{equation}
\gamma_{l}(\bm{p})=\frac{g_{ll}p_{l}}{\sigma_{l}^{2}+\sum_{j\neq l}g_{jl}%
p_{j}}\label{eq:SINR}%
\end{equation}
be the {SINR} experienced at node $d(l)$, with $\sigma_{l}^{2}\in\mathbb{R}$
being the noise power. Consequently, the rate of link $l$, depending on the
current power allocation $\bm{p}$, is given by
\begin{equation}
r_{l}(\bm{p}) = W\cdot\log_{2}\left( 1+\gamma_{l}(\bm{p})\right)
,\label{eq:linkRates}%
\end{equation}
where $W\in\mathbb{R}$ is the available system bandwidth.

Our approach is based on the well-known {BPPC} algorithm (see
\cite{Georgiadis_06}\cite{Matskani2012_TSP} for details on this algorithm),
which is known to be throughput optimal \cite{Georgiadis_06}.

Subsequently, we show how to combine the {BPPC} approach with $\mu$-MaxWeight.
The main difference to the procedure aboveis the generation of link weights.
{As before, and} in contrast to a MaxWeight-based algorithm
that uses only the buffer states to determine the weights, we determine the
weights using our weight functions $\bm{\mu}$. Using, for example, a linear
cost function, this results in  $\left[ \bm{\mu}(\bm{x})\right] _{i} =
c_{i}\left( 1-\exp\left( -\frac{x_{i}}{\theta(1+\sum_{j\neq i}x_{j})}\right)
\right) $, such that
\begin{equation}
w_{l}^{\ast}(t) = \left( \left[ \bm{\mu}(\bm{x})\right] _{o(l)}^{(c)}-\left[
\bm{\mu}(\bm{x})\right] _{d(l)}^{(c)}\right) u^{\ast}_{l}%
.\label{eq:muMWLinkWeights}%
\end{equation}
The optimization problem at the physical layer is now
\begin{align}
\underset{\bm{p}}{\text{max.}} \quad &  \sum_{l}w_{l}^{\ast}(t)\cdot\log\left(
1+\gamma_{l}(\bm{p})\right)  \label{eq:OptProbNonConvex}%
\end{align}
subject to $p_{l}\leq P^{\text{max}}$ ($\forall l$), and
{with} weights defined in \eqref{eq:muMWLinkWeights}. Using
rates defined in \eqref{eq:linkRates}, we still have a difficult non-convex
problem, making a global optimization prohibitively complex in large networks.

Therefore, we apply the {SCA} algorithm of \cite{Papandriopoulos2009}%
\cite{Matskani2012_TSP}, which can be shown to converge at least to a {KKT}
point of \eqref{eq:OptProbNonConvex} \cite{Papandriopoulos2009}. Core of the
{SCA} algorithm is to convexify \eqref{eq:OptProbNonConvex} by iteratively
applying an appropriate lower bound, given by
\begin{equation}
\alpha_{l}\log(\gamma_{l}(\bm{p}))+\beta_{l} \leq\log\left( 1+\gamma
_{l}(\bm{p})\right) ,\label{eq:lowerBound}%
\end{equation}
to the link capacity, followed by a logarithmic change of variables, i.e.,
$\tilde{p}_{l} := \log(p_{l})$. Parameters $\alpha_{l}\in\mathbb{R}$ and
$\beta_{l}\in\mathbb{R}$ are chosen according to $\alpha_{l} = \frac{z_{0}%
}{1+z_{0}}$ and $\beta_{l} = \log(1+z_{0}) - \frac{z_{0}}{1+z_{0}} \log
(z_{0})$, with some $z_{0}\in\mathbb{R}$. Note that the bound
\eqref{eq:lowerBound} is tight for $z_{0} =\gamma_{l}(\bm{p})$. The algorithm
is initialized according to $\bm{\alpha}(0)=\bm{1}$ and $\bm{\beta}(0)=\bm{0}$%
, which is equivalent to the high-{SINR} approximation. Each iteration of the
algorithm first includes a maximize-step
\begin{equation}
\tilde{\bm{p}}^{\ast}(t) \in\arg\max_{\bm{\tilde{p}}}\sum_{l}w_{l}^{\ast
}(t)\tilde{r}(\tilde{\bm{p}}), \label{eq:CONVEXSUBPROBLEM}%
\end{equation}
subject to $\exp(\tilde{p}_{l}) \leq\bar{p}_{l}$, whose solution is then used
in a tighten-step to update $\alpha_{l}$ and $\beta_{l}$ for each link
according to   $\alpha_{l}(t+1)=\frac{\gamma_{l}(\bm{p}^{\ast}(t))}%
{1+\gamma_{l}(\bm{p}^{\ast}(t))}$  and $\beta_{l}(t+1)=\log\left( 1+\gamma
_{l}\left( \bm{p}^{\ast}\left( t\right) \right) \right) -\alpha_{l}\log\left(
\gamma_{l}\left( \bm{p}^{\ast}\left( t\right) \right) \right) $.  When
$\hat{r}_{l}\left( \bm{p}^{\ast}(t),\bm{\alpha}(t), \bm{\beta}(t)\right)
\approx\hat{r}_{l}\left( \bm{p}^{\ast}(t),\bm{\alpha}(t+1),
\bm{\beta}(t+1)\right) $, the algorithm terminates.

To evaluate the performance, we conduct numerical simulations over 10000 time
slots and compare to various baselines. A main advantage of the cost function
based approach is that every buffer can be weighted with different
coefficients $c_{i}$. Doing so, the use of specific buffers (or entire routes)
can be discouraged, for example, in cases where buffer space is more expensive
at selected nodes than elsewhere. We perform simulations both with unequal and
equal weights for all buffers in a network with $N=9$ nodes connected by
$L=10$ links. Furthermore, we assume a single source node and 5 possible
destination nodes, with the same average arrival rate for all traffic streams.
{The network has a layered topology that is larger, but structurally similar to the network that is depicted in Figure \ref{fig:syst_mod_schematic}, however with 3 nodes in the second layer and with 5 nodes in the third layer.}
A bandwidth of 20\,MHz is available for the wireless links.

In Figure \ref{fig:avgQueues}, we compare the average sizes of all queues in
the network over the first $200$ time slots, both with (left) and without
(right) power control. In the latter case, each node simply distributes the
available power equally over the currently active outgoing links. In case
{SCA}-based power control is used, all queues converge to a steady mean, while
in the no-power-control case some queues grow without bound. Therefore, we can
conclude that the stability region is larger due to the convex approximation
based power control algorithm. \begin{figure}[ptb]
\centering
\includegraphics[width=0.6\linewidth ]{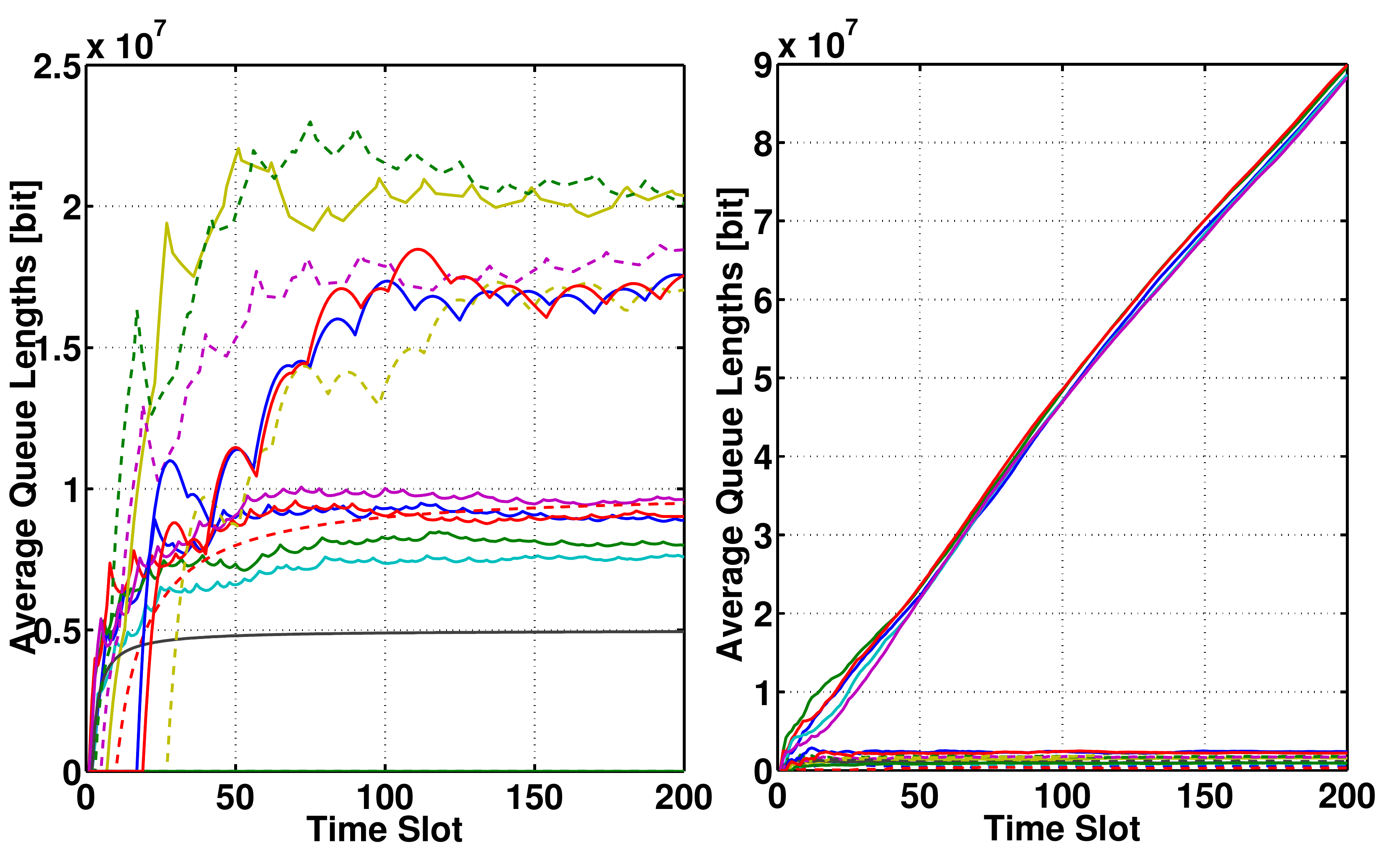}
\caption{Average buffer states over time obtained by $\mu
$-MaxWeight using {SCA}-based {power control} (left) and no
power control (right).
{The figure shows the trajectories (buffer size over time) of all ($\lvert\bm{Q}\rvert=17$)) queues in the network.}}%
\label{fig:avgQueues}%
\end{figure}In Figure \ref{fig:eqCompPols}, we compare the time averaged cost
incurred by the $\mu$-MaxWeight and {SCA} based policy with that of the
corresponding MaxWeight-based policy (with and without power control based on
{SCA}). In addition, we show the performance of the cross-layer algorithm when
relying on the high-{SINR} assumption in the optimization. Similar to what we
saw before, the stability region without power control is much smaller than
that based on the {SCA} algorithm. Although using the high-{SINR} assumption
in the algorithm can stabilize the network at higher arrival rates (unlike no
power control),  the performance is still considerably worse than that of the
{SCA}-based algorithms. To allow a comparison of the {SCA}-based cross-layer
algorithms using $\mu$-MaxWeight with that using classical MaxWeight (whose
curves are very close in the top half of Figure \ref{fig:eqCompPols}), in the
lower half of Figure \ref{fig:eqCompPols}, we show the average cost for a
relatively high arrival rate over time. After the network reaches a steady
state, significant gains can be observed compared to MaxWeight,
{and these gains}  are larger
when the mean arrival rate is higher. The more traffic traverses the network,
the higher will be the interference between active links, and the more
beneficial becomes power control. \begin{figure}[ptb]
\centering
\includegraphics[width=0.6\linewidth ]{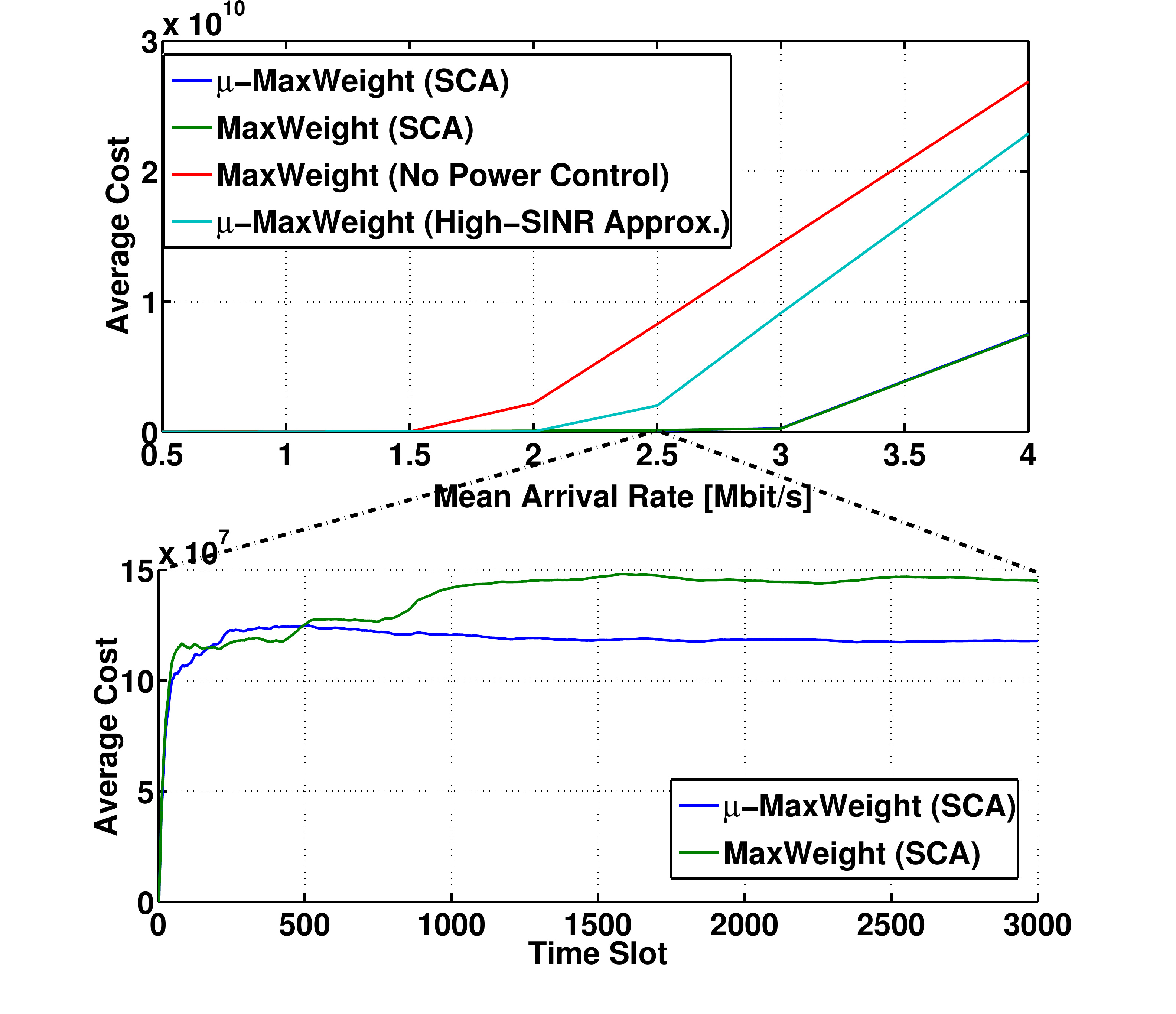}
\caption{Performance comparison of different
{control} policies.
{The $\mu$-MaxWeight based policies are based on a cost function with}
equal weights. The top figure depicts the
average cost over the mean arrival rate
{after a simulation time of 10000 time slots}, while the
bottom figure depicts the average cost over time for a selected average
arrival rate of $\alpha=2.5\,\mathrm{Mbit/s}$.}%
\label{fig:eqCompPols}%
\end{figure}

\section{Conclusions}

\label{sec:conclusion} We presented and evaluated a framework for the design
and control of queueing networks, which combines throughput optimality with
the minimization of an underlying cost function, thereby taking into account
potential service-dependent constraints. We demonstrated how this framework
can be used to flexibly adapt the network control to different applications
with corresponding requirements. Moreover, we adapted a well-known cross-layer
control algorithm to our control framework, and demonstrated how this
non-convex optimization problem can be efficiently approximated using SCA. In
all evaluated applications, numerical evaluations indicate high gains of the
proposed cost function based approach over classical throughput optimal
control policies, such as MaxWeight.

\appendices

\section{Proof of Theorem \ref{thm:corollary1}} \label{sec:Appx1}
By Condition 2) of Theorem \ref{thm:corollary1}, we can assume that the random
walk evolves on $\mathbb{R}_{+}^{m}$. Hence, we can skip Condition 2) of
Theorem \ref{thm:theorem1}, since this condition (as its counterpart in
Theorem \ref{thm:corollary1}) ensures positivity of the random walk. We need
to show that from%
\begin{equation}
\left\Vert \nabla\log\mu_{i}\left(  \bm{x}\right)  \right\Vert \leq
\epsilon,\quad\forall i\in\mathcal{M},\ \Vert\bm{x}\Vert>C_{6}\left(
\epsilon\right)  ,\label{eq:prfCond1}%
\end{equation}
(where $C_{6}\left(  \epsilon\right)  $ is sufficiently large) follows:
\begin{equation}
\left\vert \frac{\mu_{i}(\bm{x}+\Delta\bm{x})}{\sum_{j\in\mathcal{M}}\mu
_{j}(\bm{x}+\Delta\bm{x})}-\frac{\mu_{i}(\bm{x})}{\sum_{j\in\mathcal{M}}%
\mu_{j}(\bm{x})}\right\vert \leq\epsilon.\label{eq:prfCond2}%
\end{equation}
For orientation, let us assume more restrictive conditions first: take
$\mu_{i},$\ $\forall i\in\mathcal{M}$, Lipschitz continuous, and let
$\sum_{j\in\mathcal{M}}\mu_{j}(\bm{x})\rightarrow\infty$, if $\Vert
\bm{x}\Vert\rightarrow\infty$. Note that these conditions already encompasses
Meyn's perturbation (\ref{eq:pertLog}) together with, e.g., a linear cost function.

It is easy to prove the theorem with these assumptions: by the mean value
theorem, we have
$\mu_{i}\left(  \bm{x}+\Delta\bm{x}\right)  =\mu_{i}\left(  \overline
{\bm{x}}\right)  +\nabla_{\bm{x}}^{T}\mu_{i}\left(  \widetilde{\bm{x}}\right)
\Delta\overline{\bm{x}}$,
where $\overline{\bm{x}}$ is an (arbitrary) point on the line connecting
$\bm{x}$ and $\bm{x}+\Delta\bm{x}$, whereas $\widetilde{\bm{x}}$ is a point on
the line connecting $\overline{\bm{x}}$ and $\bm{x}+\Delta\overline{\bm{x}}$.
Since the field is Lipschitz, we have $\nabla_{\bm{x}}^{T}\mu_{i}%
(\overline{\bm{x}})\Delta\overline{\bm{x}}\leq C_{7}$ uniformly. Furthermore,
since the policy is non-idling $\sum_{j\in\mathcal{M}}\mu_{j}\left(
\bm{x}+\Delta\bm{x}\right)  \geq C_{8}$ where the normalization constant
$C_{8}$ can be chosen as large as possible without altering the policy (by the
construction of the policy). Moreover, since $\sum_{j\in\mathcal{M}}\mu
_{j}(\bm{x})\rightarrow\infty,\Vert\bm{x}\Vert\rightarrow\infty$, condition
(\ref{eq:prfCond2}) is equivalent to
$\left\vert \mu_{i}\left(  \bm{x}+\Delta\bm{x}\right)  -\mu_{i}\left(
\bm{x}\right)  \right\vert \leq\epsilon\sum_{j\in\mathcal{M}}\mu_{j}\left(
\overline{\bm{x}}\right)  $
and, again, by the mean value theorem:
$\left\vert \nabla^{T}\mu_{i}\left(  \overline{\bm{x}}\right)  \Delta
\bm{x}\right\vert \leq\epsilon\sum_{j\in\mathcal{M}}\mu_{j}\left(
\overline{\bm{x}}\right)  $.
Here, we tacitly assumed that we have selected $\overline{\bm{x}}$
accordingly. Since $\Delta\bm{x}$ is fixed and by the positivity of $\mu_{i}$
it is sufficient that
$\left\Vert \nabla\mu_{i}\left(  \bm{x}\right)  \right\Vert \leq\frac
{\epsilon}{\left\Vert \Delta\bm{x}\right\Vert }\mu_{i}\left(  \bm{x}\right)
$,
which is equivalent to condition (\ref{eq:prfCond1}) with some $\Vert
\bm{x}\Vert>C_{6}\left(  \epsilon^{\prime}\right)  $ ($\epsilon^{\prime}$
slightly smaller).

Let us now prove the general case. \inma{By the derivation above, }condition
(\ref{eq:prfCond2}) can be written as
\[
\frac{1}{\sum_{j\in\mathcal{M}}\mu_{j}(\bm{x})}\nabla^{T}\mu_{i}%
(\bm{x})\Delta\bm{x}=\epsilon_{n},
\]
for some $\bm{x}$ with $\Vert\bm{x}\Vert>C\left(  \epsilon_{n}\right)  $ where
$\epsilon_{n}$ is a zero sequence and $C\left(  \epsilon_{n}\right)  $ is
strictly increasing for any fixed $\Delta\bm{x}\in\mathbb{R}^{m}$. Now, again,
by the mean value theorem
$\left\vert \frac{\mu_{i}(\bm{x}+\Delta\bm{x})}{\sum_{j\in\mathcal{M}}\mu
_{j}(\bar{\bm{x}})+\nabla^{T}\mu_{j}(\tilde{\bm{x}})\Delta\bm{x}}-\frac
{\mu_{i}(\bm{x})}{\sum_{j\in\mathcal{M}}\mu_{j}(\bar{\bm{x}})-\nabla^{T}%
\mu_{j}(\tilde{\underline{\bm{x}}})\Delta\underline{\bm{x}}}\right\vert
\leq\epsilon$,
where we set $\bar{\bm{x}}$ as before and let $\bm{x}+\Delta\underline
{\bm{x}}=\bar{\bm{x}}$ and $\bar{\bm{x}}+\Delta\bar{\bm{x}}=\bm{x}+\Delta
\bm{x}$. $\underline{\tilde{\bm{x}}},\tilde{\bm{x}}$ are points on the lines
connecting $\bm{x}$ with $\bar{\bm{x}}$, and $\bar{\bm{x}}$ with
$\bm{x}+\Delta\bm{x}$, respectively. Note that $\mu_{j}(\bar{\bm{x}})$ is zero
if and only if $\mu_{i}(\bm{x}+\Delta\bm{x})$ and $\mu_{i}(\bm{x})$ are both
zero since otherwise by condition (\ref{eq:prfCond1}) the gradient would be
zero as well. Since in this case the condition is trivially satisfied so that
we exclude it. Hence,
it follows that
\begin{equation*}
  \left\vert \mu_{i}(\bm{x}+\Delta\bm{x})-\mu_{i}(\bm{x})\frac{\sum
_{j\in\mathcal{M}}\mu_{j}(\bar{\bm{x}})(1+\overbrace{\frac{\nabla^{T}\mu
_{j}(\tilde{\bm{x}})\Delta\bar{\bm{x}}}{\mu_{j}(\bar{\bm{x}})}}^{(A)})}%
{\sum_{j\in\mathcal{M}}\mu_{j}(\bar{\bm{x}})(1\underbrace{-\frac{\nabla^{T}%
\mu_{j}(\underline{\tilde{\bm{x}}})\Delta\underline{\bm{x}}}{\mu_{j}%
(\bar{\bm{x}})}}_{(B)})}\right\vert 
  \leq\epsilon\cdot\sum_{j\in\mathcal{M}}\mu_{j}(\bar{\bm{x}})\left(
1+\frac{\nabla^{T}\mu_{j}(\tilde{\bm{x}})\Delta\bar{\bm{x}}}{\mu_{j}%
(\bar{\bm{x}})}\right)  .
\end{equation*}

We can prove that, because of condition (\ref{eq:prfCond1}), (A) and (B) are
zero sequences: suppose $\nabla^{T}\mu_{j}(\tilde{\bm{x}})$ is non-zero (then
we can stop anyway) then by the repeated application of the mean value
theorem, the denominator of, say, (A) can be written as
$\mu_{j}(\bar{\bm{x}})=\mu_{j}(\tilde{\bm{x}})+\nabla^{T}\mu_{j}%
(\bm{x}_{2})\Delta\bm{x}_{2}$.
This process generates sequences in $\mathbb{R}_{+}^{m}$ with $\tilde
{\bm{x}}=\bm{x}_{1},\bm{x}_{2},...$ and $\Delta\bar{\bm{x}}=\Delta\bar
{\bm{x}}_{1}\subset\Delta\bar{\bm{x}}_{2},...$ which are bounded and hence we
can pick subsequences converging to some set of limit points $\bm{x}_{\infty
}^{(k)},k=1,2,...$. Note that we can restrict the number of limit points to at
most two since by definition every limit point is visited arbitrarily often
and infinitely close and by construction of the sequence there is no
possibility of more than two limit points which neither contain the other in
between them. Take these two limit points with corresponding subsequence
$\bm{x}_{n}^{(k)},k=1,2$: by continuous differentiability we have $\mu
_{j}(\bm{x}_{n}^{(k)})\rightarrow\mu_{j}(\bm{x}_{\infty}^{(k)})$ and
$\nabla\mu_{j}(\bm{x}_{n}^{(k)})\rightarrow\nabla\mu_{j}(\bm{x}_{\infty}%
^{(k)}),k=1,2$. It must also hold in the limit that
$\mu_{j}(\bm{x}_{\infty}^{(1)})+\nabla^{T}\mu_{j}(\bm{x}_{\infty}%
^{(2)})(\bm{x}_{\infty}^{(1)}-\bm{x}_{\infty}^{(2)})=\mu_{j}(\bm{x}_{\infty
}^{(2)})$
(and vice versa). Since then%
\[
\frac{\nabla^{T}\mu_{j}(\bm{x}_{\infty}^{\left(  2\right)  })(\bm{x}_{\infty
}^{(1)}-\bm{x}_{\infty}^{(2)})}{\mu_{j}(\bm{x}_{\infty}^{\left(  2\right)
})\mu_{j}(\bm{x}_{\infty}^{(2)})}\leq\epsilon,
\]
(and vice versa) where $\epsilon>0$ is arbitrarily small by condition
(\ref{eq:prfCond1}) we conclude that $\mu_{j}(\bm{x}_{\infty}^{\left(
1\right)  })=\mu_{j}(\bm{x}_{\infty}^{(2)})$ (but not necessarily
$\bm{x}_{\infty}^{\left(  1\right)  }=\bm{x}_{\infty}^{(2)}$).

Now, we can proceed the process sufficiently often as%
\[
\frac{\nabla^{T}\mu_{j}(\bm{x}_{1})\Delta\bm{x}_{1}}{\mu_{j}(\bar{\bm{x}}%
)}\leq\frac{\nabla^{T}\mu_{j}(\bm{x}_{1})\Delta\bm{x}_{1}}{\mu_{j}%
(\bm{x}_{1})\left(  1+\frac{\nabla^{T}\mu_{j}(\bm{x}_{2})\Delta\bm{x}_{2}}%
{\mu_{j}(\bm{x}_{1})}\right)  }\leq\;...
\]
such that in the final step%
\begin{align*}
\frac{\nabla^{T}\mu_{j}(\bm{x}_{n+1})\Delta\bm{x}_{n+1}}{\mu_{j}(\bm{x}_{n})}
  =\frac{(\nabla^{T}\mu_{j}(\bm{x}_{\infty}^{\left(  k\right)  }%
)+\epsilon_{n}^{k})\Delta\bm{x}_{n+1}}{\mu_{j}(\bm{x}_{\infty}^{\left(
l\right)  })+\epsilon_{n}^{l}}
  =\frac{(\nabla^{T}\mu_{j}(\bm{x}_{\infty}^{\left(  k\right)  }%
)+\epsilon_{n}^{k})\Delta\bm{x}_{n+1}}{\mu_{j}(\bm{x}_{\infty}^{\left(
k\right)  })+\epsilon_{n}^{l}}
  \leq\epsilon,\;k,l=1,2,
\end{align*}
by condition (\ref{eq:prfCond1}). Hence, we have
\[
\frac{\sum_{j\in\mathcal{M}}\mu_{j}(\bar{\bm{x}})\left(  1+\frac{\nabla^{T}%
\mu_{j}(\tilde{\bm{x}})\Delta\bar{\bm{x}}}{\mu_{j}(\bar{\bm{x}})}\right)
}{\sum_{j\in\mathcal{M}}\mu_{j}(\bar{\bm{x}})\left(  1+\frac{\nabla^{T}\mu
_{j}(\tilde{\underline{\bm{x}}})\Delta\underline{\bm{x}}}{\mu_{j}(\bar
{\bm{x}})}\right)  }=\frac{\left(  1+\epsilon_{n}^{\prime}\right)  }{\left(
1+\epsilon_{n}^{\prime\prime}\right)  }=1+\epsilon_{n}^{\prime\prime\prime},
\]
$\epsilon_{n}^{\prime},\epsilon_{n}^{\prime\prime}\text{ zero sequences}$, and
further
$\left\vert \mu_{i}(\bm{x}+\Delta\bm{x})-\mu_{i}(\bm{x})(1+\epsilon
_{n}^{\prime\prime\prime})\right\vert \leq\left\vert \mu_{i}(\bm{x}+\Delta
\bm{x})-\mu_{i}(\bm{x})\right\vert +\mu_{i}(\bm{x})\epsilon_{n}^{\prime
\prime\prime}\leq\epsilon\sum_{j\in\mathcal{M}}\mu_{j}(\tilde{\bm{x}}%
)(1+\epsilon_{n}^{\prime})$,
which is equivalent to:%
\[
\left\vert \mu_{i}(\bm{x}+\Delta\bm{x})-\mu_{i}(\bm{x})\right\vert
\leq\epsilon\sum_{j\in\mathcal{M}}\mu_{j}(\bar{\bm{x}})(1+\epsilon_{n}%
^{\prime})-\epsilon_{n}^{\prime\prime}\mu_{i}(\bm{x}).
\]
Since $\bar{\bm{x}}$ is arbitrary and can be suitably choose, condition
(\ref{eq:prfCond1}) with some $\Vert\bm{x}\Vert>C_{6}\left(  \epsilon
^{\prime\prime\prime\prime}\right)  $ is sufficient for the latter to hold.

\section{Proof of Theorem \ref{thm:corollary2}} \label{sec:Appx2}
We can write
\[
\mu_{i}(\bm{x})=\frac{\partial h}{\partial x_{i}}(\bm{x})=l(x_{i}%
)\frac{\partial h_{0}}{\partial\tilde{x_{i}}}(\tilde{\bm{x}})
\]
where we defined $l:=\frac{\partial\tilde{x_{i}}}{\partial x_{i}}$. Note, that
here $\tilde{x_{i}}$ only depends on $x_{i}$. The gradient of the weight
$\mu_{i}(\bm{x})$ is given by:%
\[
\frac{\partial\mu_{i}}{\partial x_{j}}(\bm{x})=%
\begin{cases}
\frac{\partial l}{\partial x_{i}}(x_{i})\frac{\partial h_{0}}{\partial
\tilde{x_{i}}}(\tilde{\bm{x}})+l(x_{i})\frac{\partial}{\partial x_{i}}%
\frac{\partial h_{0}}{\partial{\tilde{x_{i}}}}(\tilde{\bm{x}}) & i=j\\
\frac{\partial}{\partial x_{j}}\frac{\partial h_{0}}{\partial\tilde{x}_{i}%
}(\tilde{\bm{x}})\cdot l(x_{i}) & i\neq j
\end{cases}
\]
Define $\bm{x}^{\Delta}:=\bm{x}+\Delta\bm{x}$ and $\tilde{\bm{x}}^{\Delta
}:=\tilde{\bm{x}}(\bm{x}^{\Delta})$. From the proof of Theorem
\ref{thm:corollary1} it is clear that we only have to show that
\[
\frac{\left\vert \nabla^{T}\mu_{i}(\bm{x})\Delta\bm{x}\right\vert }{\left\Vert
\boldsymbol{\mu}(\bm{x}^{\Delta})\right\Vert }\leq\epsilon,
\]
for some $\epsilon>0$ arbitrarily small. This can be rewritten as:%
\begin{gather*}
\frac{\frac{\partial l}{\partial x_{i}}(x_{i})\frac{\partial h_{0}}%
{\partial\tilde{x_{i}}}(\tilde{\bm{x}})\Delta x_{i}+l(x_{i})\frac{\partial
}{\partial x_{i}}\frac{\partial h_{0}}{\partial{\tilde{x_{i}}}}(\tilde
{\bm{x}})\Delta x_{i}}{\sum_{j\in\mathcal{M}}l(x_{j}^{\Delta})\frac{\partial
h_{0}}{\partial\tilde{x_{j}}}(\tilde{\bm{x}}^{\Delta})}
+\frac{l(x_{i})\sum_{j\in\mathcal{M},j\neq i}\frac{\partial}{\partial x_{j}%
}\frac{\partial h_{0}}{\partial\tilde{x}_{i}}(\tilde{\bm{x}})\Delta x_{j}%
}{\sum_{j\in\mathcal{M}}l(x_{j}^{\Delta})\frac{\partial h_{0}}{\partial
\tilde{x_{j}}}(\tilde{\bm{x}}^{\Delta})}\leq\epsilon
\end{gather*}
Since $\frac{\partial h_{0}}{\partial\tilde{x_{i}}},l$ are Lipschitz, thus
$\frac{\partial}{\partial x_{j}}\frac{\partial h_{0}}{\partial\tilde{x_{i}}%
},\frac{\partial l}{\partial x_{i}}$ are uniformly bounded, and $l(x_{i}%
),\frac{\partial h_{0}}{\partial\tilde{x}_{i}}(\tilde{\bm{x}})\geq
l^{1+\epsilon}(x_{i})\rightarrow\infty$ when $x_{i}\rightarrow\infty$, the
effect of $\Delta\bm{x}$ vanishes in the denominator. The condition
$\frac{\partial h_{0}}{\partial\tilde{x}_{i}}(\tilde{\bm{x}})\geq
l^{1+\epsilon}(x_{i})$ is required since we have expressions of the form%
\[
\frac{l(x_{i})l(x_{j})}{l(x_{i})\frac{\partial h_{0}}{\partial\tilde{x_{i}}%
}(\tilde{\bm{x}})+l(x_{j})\frac{\partial h_{0}}{\partial\tilde{x_{j}}}%
(\tilde{\bm{x}})}%
\]
which then become arbitrarily small.

\bibliographystyle{IEEEtran}
\bibliography{IEEEabrv,TUB_Report}

\end{document}